\documentclass[journal,final,twocolumn,10pt,twoside]{IEEEtranTCOM}

\normalsize

\ifCLASSINFOpdf
 
\else
 
\fi

\hyphenation{op-tical net-works semi-conduc-tor}
\usepackage{mathtools}
\usepackage{amsmath}
\usepackage{amssymb}
\usepackage{amsmath}
\usepackage{cite}
\usepackage{algorithm, algorithmic}

\usepackage{siunitx}
\usepackage{multirow}
\usepackage{multicol}

\usepackage{graphicx}
\usepackage{graphicx}
\date{}

\usepackage{calc}
\newcolumntype{M}[1]{>{\centering\arraybackslash}m{#1}}
\newcolumntype{N}{@{}m{0pt}@{}}

\usepackage{mdwmath}
\usepackage{blindtext}
\usepackage{eqparbox}
\usepackage{fixltx2e}
\usepackage{stfloats}
\DeclareMathOperator{\E}{\mathbb{E}}

\usepackage{amsthm}
\usepackage{color}
\usepackage{xcolor}

\newtheorem{theorem}{{Theorem}}
\newtheorem{lemma}[theorem]{{Lemma}}

\newtheorem{corollary}[theorem]{{Corollary}}

\DeclareMathAlphabet{\mathbfsl}{OT1}{ppl}{b}{it} 



\def\QEDclosed{\mbox{\rule[0pt]{1.3ex}{1.3ex}}} 

\def\QED{\QEDclosed} 

\def\endproof{\hspace*{\fill}~\QED\par\endtrivlist\unskip}
\def\endproofempt{\hspace*{\fill}~$\square$\par\endtrivlist\unskip}

\newcommand{\be}[1]{\begin{equation}\label{#1}}
\newcommand{\ee}{\end{equation}}


\renewcommand{\leq}{\leqslant}
 
\renewcommand{\geq}{\geqslant}




\newcommand{\Tref}[1]{Theo\-rem\,\ref{#1}}

\newcommand{\Lref}[1]{Lem\-ma\,\ref{#1}}
\newcommand{\Cref}[1]{Co\-ro\-lla\-ry\,\ref{#1}}





\begin{document}
\title{Massive Coded-NOMA for Low-Capacity Channels:\\
	A Low-Complexity Recursive Approach}
\author{Mohammad Vahid Jamali and Hessam Mahdavifar,~\IEEEmembership{Member,~IEEE}
\thanks{The material in this paper was presented in part at the IEEE Global Communications Conference (GLOBECOM), Abu Dhabi, UAE, Dec. 2018 \cite{jamali2018low}.}
\thanks{The authors are with the Department of Electrical Engineering and Computer Science, University of Michigan, Ann Arbor, MI 48109, USA (e-mail: mvjamali@umich.edu, hessam@umich.edu).}
\thanks{This work was supported by the National Science Foundation
under grants CCF--1763348, CCF--1909771, and CCF--1941633.}
}
\maketitle
\begin{abstract}
In this paper, we present a low-complexity recursive approach for massive and scalable code-domain nonorthogonal multiple access (NOMA) with applications to emerging low-capacity scenarios. The problem definition in this paper is inspired by three major requirements of the next generations of wireless networks. Firstly, the proposed scheme is particularly beneficial in low-capacity regimes which is important in practical scenarios of utmost interest such as the Internet-of-Things (IoT) and massive machine-type communication (mMTC). Secondly, we employ code-domain NOMA to efficiently share the scarce common resources among the users. Finally, the proposed recursive approach enables code-domain NOMA with low-complexity detection algorithms that are scalable with the number of users to satisfy the requirements of massive connectivity. To this end, we propose a novel encoding and decoding scheme for code-domain NOMA based on factorizing the pattern matrix, for assigning the available resource elements to the users, as the Kronecker product of several smaller factor matrices. As a result, both the pattern matrix design at the transmitter side and the mixed symbols' detection at the receiver side can be performed over matrices with dimensions that are much smaller than the overall pattern matrix. Consequently, this leads to significant reduction in both the complexity and the latency of the detection. We present the detection algorithm for the general case of factor matrices. The proposed algorithm involves several recursions each involving certain sets of equations corresponding to a certain factor matrix. We then characterize the system performance in terms of average sum rate, latency, and detection complexity. Our latency and complexity analysis confirm the superiority of our proposed scheme in enabling large pattern matrices. Moreover, our numerical results for the average sum rate show that the proposed scheme provides better performance compared to straightforward code-domain NOMA with comparable complexity, especially at low-capacity regimes.
\end{abstract}
\begin{keywords} 
	Code-domain NOMA, low-capacity channels, massive communication, low-complexity recursive detection, low-latency communication, IoT, mMTC.
\end{keywords}
\IEEEpeerreviewmaketitle
\section{Introduction}\label{Sec1}
\IEEEPARstart{L}ow-capacity scenarios have become increasingly important in a variety of emerging applications such as the Internet-of-Things (IoT) and massive machine-type communication (mMTC) \cite{fereydounian2019channel}.
 For example, the narrowband IoT (NB-IoT) feature, included in Release-13 of the 3rd generation partnership project (3GPP), is specifically meant for ultra-low-rate, wide-area, and low-power applications \cite{wang2017primer}.
To ensure wide-area applications, NB-IoT is designed to support maximum coupling losses (MCLs) as large as $170$ \si{dB}. Achieving such large MCLs requires reliable detection for signal-to-noise ratios (SNRs) as low as $-13$ \si{dB} \cite{ratasuk2016overview}. 
Consequently, one needs to carefully design the communication protocols aimed for massive communication applications, such as IoT and mMTC,
with respect to the low-SNR constraints.

Recently, nonorthogonal multiple access (NOMA) techniques, that borrow ideas from solutions to traditional problems in network information theory including multiple access and broadcast channels, have gained significant attention. In NOMA schemes, multiple users are served in the same orthogonal resource element (RE) or, more generally speaking, a set of users are served in a smaller set of REs. The goal is to significantly increase the system throughput and reliability, improve the users' fairness, reduce the latency, and support massive connectivity \cite{ding2017survey,wan2019promising}. NOMA is a general setup and, in principle, any multiple access scheme that attempts to \textit{non-orthogonally} share the REs among the users, e.g., random multiple access \cite{ployISIT2017_1,mag2018,yuan2020iterative} and opportunistic approaches \cite{shahsavari2018opportunistic}, can be formulated in this setting. In general, NOMA techniques in the literature can be classified into two categories: power-domain NOMA and code-domain NOMA.

Power-domain NOMA serves multiple users in the same orthogonal RE by properly allocating different power levels to the users \cite{islam2016power}. Power-domain NOMA has attracted significant attention in recent years and several problems have been explored in this context. This includes cooperative communication \cite{ding2015cooperative,li2019residual,pei2020secure}, simultaneous wireless information and power transfer (SWIPT) \cite{liu2016cooperative}, multiple-input multiple-output (MIMO) systems \cite{sun2015ergodic}, mmWave communications \cite{ding2017random}, mixed radio frequency and free-space optics (RF-FSO) systems \cite{jamali2018outage,jamali2019uplink2}, unmanned aerial vehicles (UAVs) communications \cite{hou2019exploiting}, and cache-aided systems \cite{doan2019power}.
The detection procedure in power-domain NOMA highly relies on successive interference cancellation (SIC) which works well provided that the channel conditions of the users paired together are not close to each other.

Code-domain NOMA, on the other hand, aims at serving a set of users, say $K$, in a set of $M$ orthogonal REs, with $M\leq K$, using a certain code/pattern matrix. The pattern matrix comprises $K$ pattern vectors each assigned to a user specifying the set of available REs to that user. Unlike power-domain NOMA, code-domain NOMA works well even in power-balanced scenarios provided that each user is served by a unique pattern vector. Nevertheless, code-domain NOMA has received relatively less attention in the literature compared to the power-domain NOMA. This is mainly because its gain usually comes at the expense of complex multiuser detection (MUD) algorithms, such as maximum \textit{a posteriori} (MAP) detection, {message passing algorithm (MPA)}, and maximum likelihood (ML) detection. 
In this context,
sparse code multiple access (SCMA) is proposed in \cite{nikopour2013sparse} based on directly mapping the incoming bits to multidimensional codewords of a certain SCMA codebook set. Also, a low-complexity SCMA decoding algorithm is proposed in \cite{wei2016low2} based on list sphere decoding. Another efficient SCMA decoder is discussed in \cite{zhang2020efficient} based on deterministic message passing algorithms. Lattice partition multiple access (LPMA) is proposed in \cite{fang2016lattice,qiu2018lattice1,qiu2018lattice2} based on multilevel lattice codes for multiple users. Interleave-grid multiple access (IGMA) is proposed in \cite{xiong2017advanced,hu2018nonorthogonal} in order to increase the user multiplexing capability and to improve the performance.
Moreover, pattern division multiple access (PDMA) is introduced in \cite{dai2014successive,chen2017pattern}, where the pattern vectors are designed with disparate orders of transmission diversity to mitigate the error propagation problem in SIC receivers. As opposed to most of the aforementioned prior works in code-domain NOMA, the pattern matrix in PDMA is not necessarily sparse. In other words, the number of REs allocated to a particular user can potentially be comparable to the total number of REs. { Very recently, a low-complexity on-off division multiple access (ODMA) scheme has been proposed in \cite{song2020super} for  code-domain NOMA systems. In the ODMA scheme proposed in \cite{song2020super}, each user employs the same channel code whose coded bits, after modulation, are sent in a random time-hopping manner over much larger number of time slots than the length of the channel code by keeping the remaining time slots idle, resulting in a super-sparse multiple access with a very low-complexity iterative multi-user decoding method. Also,  a coded MIMO-NOMA system with capacity-approaching performance and low implementation complexity has been proposed in \cite{chi2018practical} which consists of a linear minimum mean-square error (LMMSE) multi-user detector and a bank of single-user message-passing decoders to decompose the overall NOMA signal recovery into distributed low-complexity computations with iterative processing. Additionally, the achievable rates of approximate message passing (AMP) algorithm for coded random linear systems has been analyzed in \cite{liu2019capacity}, proving that the low-complexity AMP algorithm achieves the constrained capacity based on matched forward error control (FEC) coding.}

Design of the pattern matrix plays a critical role in code-domain NOMA to balance the trade-off between the system performance and the complexity. Impact of the pattern matrix on the average sum-rate of SCMA systems is explored in \cite{yang2016impact}, where a low-complexity iterative algorithm is proposed to facilitate the design of the pattern matrix. Moreover, the total throughput of low-density code-domain (LDCD) NOMA is characterized in \cite{shental2017low} for regular random pattern matrices with large dimensions. It is well understood that, for a given overload factor $\beta\triangleq K/M$, expanding the dimension of the pattern matrix improves the system performance \cite{chen2017pattern}.  However, increasing the pattern matrix dimension significantly increases the detection complexity.

Inspired by the aforementioned trade-off between the system performance and the detection complexity, we propose a novel encoding and decoding approach toward code-domain NOMA which factorizes the pattern matrix as the Kronecker product of several smaller factor matrices. Consequently, as we show, both the pattern matrix design at the transmitter side and the mixed symbols' detection at the receiver side can be performed over much smaller dimensions and with significantly reduced complexity, through recursive detection.
{In other words, our scheme enables application of pattern matrices with large dimensions while keeping the overall detection complexity and latency manageably low. As we establish later, this provides the possibility to significantly improve the system performance at a given, reasonably low, complexity and latency level.}

 The low complexity of the proposed detection algorithm for pattern matrices with large dimensions enables grouping a massive number of users together, referred to as ``massive coded-NOMA'' in this paper. Moreover, the proposed multiple access technique is particularly advantageous in scenarios where there is a stringent constraint on the maximum power of the users' symbols. Therefore, even full-power transmission of the users' symbols (up to a predefined maximum allowed power) over only one RE (or few REs) does not meet the users' desired data rates. In such circumstances, users are inevitable to transmit over several REs, possibly with the maximum allowed power per symbol, to achieve the desired data rates. For instance, a large number of repetitions are allowed in low-capacity scenarios such as NB-IoT and mMTC in order to enable reliable communications in these emerging applications \cite{ratasuk2016overview}.
Our proposed protocol, through facilitating coded-NOMA over large-dimension pattern matrices, enables spreading the symbols of the low-capacity users over several REs and then detecting them with a low complexity. { We start our studies with the traditional Gaussian multiple access channel (GMAC) model, and then clarify how the results can be extended to more practical scenarios such as fading channels.}

{ The main contributions of the paper are summarized as follows.
	\begin{itemize}
		\item We propose a low-complexity and scalable approach toward code-domain NOMA by constructing the overall pattern matrix as the Kronecker product of several factor matrices. We show that the proposed scheme significantly reduces the detection complexity and also facilitates the design of good pattern matrices; hence, it allows incorporating large pattern matrices that are of particular interest for massive communication and low-capacity channels.
		\item For the Kronecker product of square factor matrices we propose a systematic way of choosing the factor matrices that enables a remarkably low-complexity detection algorithm involving only few linear operations at each recursion. We show that our design scheme and the proposed detection algorithm for the Kronecker product of square factor matrices effectively increases the SNRs of data symbols after each recursion that is of particular importance for low-capacity channels.
		\item We provide a generic recursive detection algorithm for the Kronecker product of rectangular factor matrices that can work on the general case of factor matrices.
		\item We derive useful expressions for the characterization of important system performance metrics such as the average sum-rate per RE, latency, and detection complexity. 
		\item We demonstrate how the results of the paper, derived for the Gaussian MAC model, can be extended to practical scenarios such as uplink and downlink fading channels. We also highlight how the proposed code-domain NOMA scheme in this paper can be combined with power-domain NOMA to boost the system performance in various aspects.
		\item We provide extensive numerical analysis to study the system performance in various scenarios.
\end{itemize}}

The rest of the paper is organized as follows. In Section \ref{Sec2}, we briefly describe the system and channel models. In Section \ref{Sec3}, we focus on the design procedure and detection algorithm of the proposed protocol over GMAC. We then characterize the average sum rate, latency, and detection complexity in Section \ref{Sec4}. 
 We devote Section \ref{Sec5} to more realistic scenarios, and Section \ref{Sec6} to numerical results. Finally, we conclude the paper and highlight several future research directions in Section \ref{Sec7}.

\section{Preliminaries and System Model}\label{Sec2}

In this section, we briefly review the basics of code-domain NOMA, specifically PDMA \cite{chen2017pattern}, which is relevant to the system model of our proposed scheme. 
We consider a collection of $K$ users communicating using $M$ REs, and define the overload factor as $\beta\triangleq K/M\geq 1$ implying the multiplexing gain of PDMA compared to the orthogonal multiple access (OMA).
The modulation symbol $x_k$ of the $k$-th user is spread over $M$ orthogonal REs using the pattern vector $\boldsymbol{g}_k$ as $\boldsymbol{v}_k\triangleq\boldsymbol{g}_k x_k$, $1\leq k\leq K$, where $\boldsymbol{g}_k\in{\mathcal{B}^{M\times 1}}$, with $\mathcal{B}\triangleq\{0,1\}$, is an $M\times 1$ binary vector defining the set of REs available to the $k$-th user; the $k$-th user can use the $i$-th RE, $1\leq i\leq M$, if the $i$-th element of $\boldsymbol{g}_k$ is ``$1$'', i.e., if ${g}_{i,k}\triangleq\boldsymbol{g}_{k}(i)=1$. Otherwise, if ${g}_{i,k}=0$, then the $k$-th user does not use the $i$-th RE. Then the overall ($M\times K$)-dimensional pattern matrix $\boldsymbol{G}\in{\mathcal{B}^{M\times K}}$ is specified as follows \cite{chen2017pattern}:
\begin{align}\label{eq1}
	\boldsymbol{G}_{M\times K}\triangleq
	\begin{bmatrix}
		\boldsymbol{g}_{1} & \boldsymbol{g}_{2}& \cdots &\boldsymbol{g}_{K}
	\end{bmatrix}=
	\begin{bmatrix}
		{g}_{i,k}
	\end{bmatrix}_{{M\times K}}.
\end{align}

For the uplink transmission each user transmits its spread symbol to the base station (BS). Therefore, assuming perfect synchronization at the BS, the
vector $\boldsymbol{y}$ comprising the received signals at all $M$ REs can be modeled as
\begin{align}\label{eq2}
	\boldsymbol{y}=\sum\nolimits_{k=1}^{K}{\rm diag}(\boldsymbol{h}_k)\boldsymbol{v}_k+\boldsymbol{n},
\end{align}
where $\boldsymbol{h}_k$ is the vector modeling the uplink channel response of the $k$-th user at all of $M$ REs, and $\boldsymbol{n}$ is the noise vector at the BS with length $M$. Furthermore, ${\rm diag}(\boldsymbol{h}_k)$ is a diagonal matrix consisting of the elements of $\boldsymbol{h}_k$. Then the uplink transmission model can be reformulated as
\begin{align}\label{eq3}
	\boldsymbol{y}=\boldsymbol{H}\boldsymbol{x}+\boldsymbol{n},
\end{align}
where $\boldsymbol{x}\triangleq[x_1,x_2,\dots,x_K]^T$, $\boldsymbol{H}\triangleq\boldsymbol{\mathcal{H}}{\odot}\boldsymbol{G}_{M\times K}$ is the PDMA equivalent uplink channel response, $\boldsymbol{\mathcal{H}}\triangleq[\boldsymbol{h}_1,\boldsymbol{h}_2,\dots,\boldsymbol{h}_K]$, $\boldsymbol{A}^T$ is the transpose of the matrix $\boldsymbol{A}$, and $\odot$ denotes the element-wise product \cite{chen2017pattern}.

Moreover, for the downlink transmission, the BS first encodes the data symbol of each user according to its pattern vector and then transmits the superimposed encoded symbols $\sum_{j=1}^{K}\boldsymbol{v}_j$ through the channel. Therefore, the received signal $\boldsymbol{y}_k$ at the $k$-th user can be expressed as
\begin{align}\label{eq4}
	\boldsymbol{y}_k&={\rm diag}(\boldsymbol{h}_k)\sum\nolimits_{j=1}^{K}\boldsymbol{g}_jx_j+\boldsymbol{n}_k
	=\boldsymbol{H}_k\boldsymbol{x}+\boldsymbol{n}_k,
\end{align}
where $\boldsymbol{H}_k\triangleq {\rm diag}(\boldsymbol{h}_k)\boldsymbol{G}_{M\times K}$ is the PDMA equivalent downlink channel response of the $k$-th user. Moreover, $\boldsymbol{h}_k$ and $\boldsymbol{n}_k$ are the downlink channel response and the noise vector, both with length $M$, at the $k$-th user, respectively \cite{chen2017pattern}.

As a standard MUD algorithm in NOMA systems SIC provides a proper trade-off between the system performance and the complexity. However, SIC receivers often suffer from error propagation problems as the system performance is highly dependent on the correctness of early-detected symbols. In order to resolve this issue, one can either improve the reliability of the initially-decoded users or employ more advanced detection algorithms including ML and MAP. In \cite{chen2017pattern}, disparate diversity orders are adopted for different users by assigning patterns with heavier weights to those early-detected users in order to increase their transmission reliability. Furthermore, employing more advanced detection algorithms severely increases the system complexity especially for larger pattern matrices; this may hinder their practical implementation particularly for downlink transmission where the users are supposed to have lower computational resources than the BS.

 In the next section, we elaborate how the proposed scheme scales with the dimension of the pattern matrix, even with rather heavy pattern weights, to boost massive connectivity without a significant increase on the overall system complexity. In Sections \ref{Sec3} and \ref{Sec4}, we consider the conventional model of GMAC, i.e., $\boldsymbol{y}=\boldsymbol{G}\boldsymbol{x}+\boldsymbol{n}$, that corresponds to the case where
{ $\boldsymbol{\mathcal{H}}$ defined after \eqref{eq3} is replaced by an all-one matrix (or $\boldsymbol{h}_k$ defined after \eqref{eq4} is replaced by an all-one vector). The simple channel model of GMAC, although is conventional in the literature, has several limitations in real practice, e.g., it ignores the information about the channel gains. However, since this is an initial research on this topic, we start with the simplest channel model to streamline the presentation of the underlying design strategies and detection algorithms for the proposed low-complexity code-domain NOMA.
 We then demonstrate, in Section \ref{Sec5A}, how the results can be extended to the cases of uplink and downlink fading channels, as in \eqref{eq3} and \eqref{eq4}, respectively.}

\section{Design and Detection over Gaussian MAC}\label{Sec3}
\subsection{Design Principles}\label{Sec3A}
Depending on the number of users and the available REs to them, we propose to consider pattern matrices that are factorized as the
Kronecker product of a certain number, denoted by $L$, of smaller factor matrices as follows:
\begin{align}\label{eq5}
	\boldsymbol{G}_{M\times K}=\boldsymbol{G}^{(1)}_{m_1\times k_1}\otimes \boldsymbol{G}^{(2)}_{m_2\times k_2} \otimes \dots \otimes \boldsymbol{G}^{(L)}_{m_L\times k_L},
\end{align}
where $L$ is a design parameter and  $\otimes$ denotes the Kronecker product defined as
\begin{align}\label{eq6}
	\boldsymbol{A}_{m\times k}\otimes {\boldsymbol{B}}
	\triangleq\begin{bmatrix}
		a_{11}\boldsymbol{B} & \cdots & a_{1k}\boldsymbol{B} \\ 
		\vdots & \ddots  & \vdots \\ 
		a_{m1}\boldsymbol{B} & \cdots  & a_{mk}\boldsymbol{B}
	\end{bmatrix},
\end{align} 
for any two matrices $\boldsymbol{A}$ and $\boldsymbol{B}$. The dimensions of the resulting pattern matrix in \eqref{eq5} relate to the dimensions of the factor matrices as $M=\prod_{l=1}^{L}m_l$ and $K=\prod_{l=1}^{L}k_l$. In general, if at least one of $M$ or $K$ is not a prime number, we can find some $m_l>1$ or $k_l>1$ to construct the pattern matrix as the Kronecker product of some smaller factor matrices. On the other hand, if both $M$ and $K$ are prime numbers, $L$ is equal to one and the design procedure simplifies to that of  conventional pattern matrix design (such as PDMA \cite{chen2017pattern}). Note that both $M$ and $K$ are design parameters and we can always properly group a certain number of users over a desired number of REs to optimize the system performance and the complexity. 

{ The main inspiration behind factorizing the overall pattern matrix as the Kronecker product of smaller factor matrices is to come up with a low-complexity recursive detection algorithm that can be applied to large-scale code-domain NOMA settings. As it will be elaborated in the next subsections, the proposed structure enables a recursive detection algorithm that at each recursion divides the set of equations to disjoint subsets of equations and works on them in parallel. In that sense, the factorization helps us to devise a ``\textit{divide-and-conquer}'' type of algorithm that recursively breaks down the detection problem into sub-problems that can be executed in parallel. This then significantly lowers the complexity and latency compared to the case where we directly solve the system of equations defined according to the overall pattern matrix.}

The proposed structure not only alleviates the detection complexity and latency at the receiver side but also significantly reduces the search space at the transmitting party, enabling the usage of large pattern matrices with a reasonable complexity for massive connectivity. {With the search space, we mean the number of possible pattern matrices for which one should perform a (possibly brute-force) search over them to obtain a \textit{good} pattern matrix. The notion of a \textit{good} pattern matrix depends on the context and the objective function for which we want to optimize. With that being said, a regular pattern matrix design (without factorization) requires a comprehensive search over all $\binom{{2^M}-1}{K}$ possible ($M\times K$)-dimensional binary matrices with distinct nonzero columns (patterns assigned to each user) to find an optimal pattern matrix \cite{chen2017pattern}. 
On the other hand, it is easy to show that satisfying distinct nonzero columns for the overall pattern matrix $\boldsymbol{G}_{M\times K}$ of the form given by \eqref{eq5} requires distinct nonzero columns for all of the factor matrices $\boldsymbol{G}^{(l)}_{m_l\times k_l}$, $l=1,2,\dots,L$.
Otherwise, if any of the factor matrices has a repeated column, many of the pattern vectors in the overall pattern matrix will be the same, i.e., many of the users will be served with a same pattern vector which itself requires more advanced detection algorithms to distinguish them at the receiver (see also Section \ref{Sec5B}). Therefore, the search space for our proposed design method reduces to $\prod_{l=1}^{L}\binom{2^{m_l}-1}{k_l}$ which is significantly smaller than $\binom{{2^M}-1}{K}$.}
For example, for $M=6$ and $K=9$, the regular pattern matrix design requires searching over $\binom{{2^6}-1}{9}=2.36\times 10^{10}$ possible matrices, while our design method with the factorization of $\boldsymbol{G}^{(1)}_{2\times 3} \otimes \boldsymbol{G}^{(2)}_{3\times 3}$ only needs to search over $\binom{{2^2}-1}{3}.\binom{{2^3}-1}{3}=35$ matrices. Note that recursive construction of large matrices based on the Kronecker product of some smaller matrices has been used in different contexts such as polar coding \cite{arikan2009channel} and its extended versions such as compound polar coding \cite{mahdavifar2013compound,mahdavifar2015polar}.

{In the following subsections, we first describe the detection algorithm over the Kronecker product of square and rectangular factor matrices, and then summarize the overall detection algorithm over the general case of the pattern matrix. It is worth mentioning at this point that, as it will be clarified from the extensive characterizations in the following subsections, our proposed detection algorithm for the case of square factor matrices significantly differs from that of the rectangular factor matrices and cannot be obtained as a special case of that.}
\subsection{Square Factor Matrices}\label{Sec3B}
In this subsection, we explore the design of square factor matrices and describe the corresponding detection algorithm over GMAC. In particular, we assume that the overall pattern matrix is represented as $\boldsymbol{G}_{M\times K}=\boldsymbol{P}^{(1)}_{m_1\times m_1}\otimes \boldsymbol{P}^{(2)}_{m_2\times m_2} \otimes \dots \otimes \boldsymbol{P}^{(L)}_{m_L\times m_L}$, where $\boldsymbol{P}^{(l)}_{m_l\times m_l}$, $l=1,2,\dots,L$, is an $m_l\times m_l$ binary square matrix. In this case, $M=K=\prod_{l=1}^{L}m_l$ and the overload factor $\beta$ is equal to $1$. However, we will observe that with a careful design of the square factor matrices one can improve the effective SNR of individual data symbols to a desired level that can guarantee a predetermined data rate. As it will be clarified in the next example, we define the effective SNR as the SNR of the individual data symbols at the end of the detection algorithm after several rounds of recursive combining. The aforementioned gain is obtained by a low-complexity detection algorithm involving only few linear operations (additions/subtractions) at each recursion, as detailed in Section \ref{Sec3B2}. Therefore, this scheme is useful especially when the transmission of each symbol over each RE is constrained by a maximum power limit which is often the case in low-capacity scenarios as discussed in Section \ref{Sec1}. To proceed, we begin with the following illustrative example that helps clarifying the design procedure and the recursive detection algorithm provided afterward. 

\noindent{\textbf{Example 1.}} Consider the  transmission of $K=12$ users (symbols) over $M=12$ REs realized using the Kronecker product of the following two square factor matrices
\vspace{-0.05in}
\begin{align}\label{eq7}
\boldsymbol{P}^{(1)}_{3\times 3}=
\begin{bmatrix}
1 & 1 & 0 \\ 
1 & 0 & 1\\ 
0 & 1  & 1
\end{bmatrix}, ~~~~ \boldsymbol{P}^{(2)}_{4\times 4}=
\begin{bmatrix}
0 & 0 & 0 & 1 \\
0 & 1 & 1 & 0\\ 
1 & 0 & 1  & 0\\
1 & 1 & 0 & 0
\end{bmatrix}.
\end{align}
Let us further define the following square matrices $\boldsymbol{\alpha}^{(l)}$, $l=1,2$, such that the matrix product $\boldsymbol{\alpha}^{(l)}\boldsymbol{P}^{(l)}$ is diagonal. As it will be clarified after this example and the detailed detection algorithm in Section \ref{Sec3B2}, combining the received signals according to the rows of matrices $\boldsymbol{\alpha}^{(l)}$'s significantly reduces the number of unknown data symbols after each recursion.
\begin{align}\label{eq8_alpha12}
\!\boldsymbol{\alpha}^{(1)}_{3\times 3}\!=\!\!
\begin{bmatrix}
1 & 1 & -1 \\ 
1 & -1 & 1\\ 
-1 & 1  & 1
\end{bmatrix}\!\!, ~\boldsymbol{\alpha}^{(2)}_{4\times 4}\!=\!\!
\begin{bmatrix}
0 & -1 & 1 & 1 \\
0 & 1 & -1 & 1\\ 
0 & 1 & 1  & -1\\
1 & 0 & 0 & 0
\end{bmatrix}\!\!.
\end{align}
{Then $\boldsymbol{y}_{12\times 1}=\boldsymbol{G}_{12\times 12}\boldsymbol{x}_{12\times 1}+\boldsymbol{n}_{12\times 1}$ defines the received signals vector over all $12$ REs, where $\boldsymbol{y}_{12\times 1}\triangleq [y_1,y_2,\dots,y_{12}]^T$, $\boldsymbol{G}_{12\times 12}=\boldsymbol{P}^{(1)}_{3\times3}\otimes\boldsymbol{P}^{(2)}_{4\times4}$,  $\boldsymbol{x}_{12\times 1}\triangleq [x_1,x_2,\dots,x_{12}]^T$, and $\boldsymbol{n}_{12\times 1}\triangleq [n_1,n_2,\dots,n_{12}]^T$. Using the definition of the Kronecker product in \eqref{eq6}, we have, for example, for the received signals over the first four REs $y_1=x_4+x_8+n_1$, $y_2=x_2+x_3+x_6+x_7+n_2$, $y_3=x_1+x_3+x_5+x_7+n_3$, and $y_4=x_1+x_2+x_5+x_6+n_4$.}

The resulting set of $12$ equations for the received signals $y_i$'s over all REs can be analyzed using different MUD methods to detect the data symbols $x_{i}$'s, $i=1,2,\dots,12$. However, the proposed design method in this paper enables recursive detection of the data symbols with a significantly lower complexity. Here, for the sake of brevity, we focus on the detection of $x_1$, $x_5$, and $x_9$. The rest of the symbols can be detected in a similar fashion.
Let us define $y^{(1)}_1$, $y^{(1)}_5$, and $y^{(1)}_9$ each as the linear combination of four consecutive received signals according to the first row of $\boldsymbol{\alpha}^{(2)}$, i.e.,
\begin{align}\label{eq9}
y^{(1)}_1&\triangleq y_3+y_4-y_2=2x_1+2x_5+n^{(1)}_1,\nonumber\\
y^{(1)}_5&\triangleq y_7+y_8-y_6=2x_1+2x_9+n^{(1)}_5,\nonumber\\
y^{(1)}_9&\triangleq y_{11}+y_{12}-y_{10}=2x_5+2x_9+n^{(1)}_9,
\end{align}
where $n^{(1)}_1\triangleq n_3+n_4-n_2$, $n^{(1)}_5\triangleq n_7+n_8-n_6$, and $n^{(1)}_9\triangleq n_{11}+n_{12}-n_{10}$. In a given equation involving a certain number of data symbols and a noise term, we refer to the ratio of the power of each data symbol to the noise variance as the effective SNR of that data symbol. For instance, the effective SNR of each of the symbols $x_1$, $x_5$, and $x_9$ in \eqref{eq9} is $4/3$ times the original SNR. This is because the noise terms $n_{i}$'s in different REs are independent, i.e., each of $n^{(1)}_j$'s, for $j=1,5,9$, has a mean equal to zero and variance $\sigma_1^2=3\sigma^2$, where $\sigma^2$ is the variance of the original noise terms $n_{i}$'s. In this case, roughly speaking, we say that the effective SNR is increased by a factor of $4/3$ through this recursion{\footnote{{Note that, as it will be seen at the end of Example 1 and also will be demonstrated in Section \ref{Sec3B2} (see also Fig. \ref{tree1}), the proposed detection algorithm for the Kronecker product of $L$ square factor matrices reduces (at the end of the $L$-th recursion) to singleton equations each containing a single data symbol mixed with an additive noise term. Therefore, the notion of ``effective SNR", defined here, is more relevant than the common notion of signal-to-interference-plus-noise ratio (SINR).}}}. 

Note that the set of three equations in \eqref{eq9} is defined according to the factor matrix $\boldsymbol{P}^{(1)}_{3\times3}$ as follows:
\begin{align}\label{eq10_1}
\begin{bmatrix}
y^{(1)}_1\\y^{(1)}_5\\y^{(1)}_9
\end{bmatrix}=
\begin{bmatrix}
1 & 1 & 0 \\ 
1 & 0 & 1\\ 
0 & 1  & 1
\end{bmatrix}\begin{bmatrix}
2x_1\\2x_5\\2x_9
\end{bmatrix}+\begin{bmatrix}
n^{(1)}_1\\n^{(1)}_5\\n^{(1)}_9
\end{bmatrix}.
\end{align}
Therefore, combining the three new symbols
  according to the rows of $\boldsymbol{\alpha}^{(1)}$ results in the following set of equations involving only one data symbol, also referred to as singleton equations,
\begin{align}\label{eq10}
y^{(2)}_1&\triangleq y^{(1)}_1+y^{(1)}_5-y^{(1)}_9=4x_1+n^{(2)}_1,\nonumber\\
y^{(2)}_5&\triangleq y^{(1)}_1+y^{(1)}_9-y^{(1)}_5=4x_5+n^{(2)}_5,\nonumber\\
y^{(2)}_9&\triangleq y^{(1)}_5+y^{(1)}_9-y^{(1)}_1=4x_9+n^{(2)}_9,
\end{align}
in which $n^{(2)}_1\triangleq n^{(1)}_1+n^{(1)}_5-n^{(1)}_9$, $n^{(2)}_5\triangleq n^{(1)}_1+n^{(1)}_9-n^{(1)}_5$, and $n^{(2)}_9\triangleq n^{(1)}_5+n^{(1)}_9-n^{(1)}_1$. 
Hence, the effective SNR is increased again by a factor of $4/3$ since the noise components $n^{(2)}_j$'s have mean zero and variance $\sigma_2^2=3\sigma_1^2=3^2\sigma^2$ due to the independence of $n^{(1)}_j$'s. Finally, \eqref{eq10} can be used to decode the original data symbols $x_j$'s though with the effective SNRs increased by a factor of $(4/3)^2$. Applying a similar method demonstrates that the effective SNR of the data symbols $x_4$, $x_8$, and $x_{12}$ is improved by a factor of $4/3$ while the gain on all other data symbols is $(4/3)^2$ {(please refer to Section \ref{Sec3B3} and Fig. \ref{tree2} for the general characterization of the effective SNR development through the proposed recursive detection algorithm in Section \ref{Sec3B2})}.\endproofempt

The above example illustrates the idea behind the proposed structure for the square pattern matrix and shows its potential advantages by enabling a low-complexity recursive detection process. However, there are several important questions that need to be carefully addressed. In particular, what is the criteria for selecting the factor matrices? How can the received signals in different REs be combined to get smaller dimensions and simpler sets of equations (e.g., converting the original equations for $y_i$'s to \eqref{eq9} and then \eqref{eq9} to \eqref{eq10})? What exactly will be the gain of such a combining in terms of increasing the effective SNRs? And, how many equations and with what dimensions will be left at the end to perform advanced detection algorithms? Next, we aim at properly answering these questions with respect to a square pattern matrix factorized as the Kronecker product of smaller square factor matrices.

\subsubsection{Pattern Matrix Design and Recursive Combining}\label{Sec3B1}
With the pattern matrix structure, $\boldsymbol{G}_{M\times K}=\boldsymbol{P}^{(1)}_{m_1\times m_1}\otimes \boldsymbol{P}^{(2)}_{m_2\times m_2} \otimes \dots \otimes \boldsymbol{P}^{(L)}_{m_L\times m_L}$, both the combining procedure of the received signals over different REs and the resulting SNR gains, in each recursion, directly relate to the underlying square factor matrices $\boldsymbol{P}^{(l)}$'s. As it will be elaborated in Section \ref{Sec3B2}, our proposed detection algorithm starts from the rightmost factor matrix $\boldsymbol{P}^{(L)}$ and involves $L$ recursions. We will further establish that the $l'$-th recursion, for $l'=1,2,\dots,L$, involves equations defined according to the factor matrix $\boldsymbol{P}^{(l)}$ with $l=L-l'+1$.
 Now,
Let $\boldsymbol{P}^{(l,v)}$, for $v=1,2,\dots,\binom{2^{m_l}-1}{m_l}$, denote the $v$-th possible matrix for $\boldsymbol{P}^{(l)}$. Then at the beginning of the $l'$-th recursion we have certain sets of auxiliary equations of the following form (see, e.g., \eqref{eq10_1}):
\begin{align}\label{eq13}
	\!\!\begin{bmatrix}
		y^{(l'\!-\!1)}_{i_1}\\
		\vspace{-0.32cm}\\
	y^{(l'\!-\!1)}_{i_2}\\
		\vdots\\
y^{(l'\!-\!1)}_{i_{m_l}}
	\!\!\end{bmatrix}\!\!=\!\!
	\begin{bmatrix}
		p^{(l,v)}_{1,1}\!\! & \!\!p^{(l,v)}_{1,2} \!\!&\!\! \cdots \!\!&\!\! p^{(l,v)}_{1,m_l}\\
		\vspace{-0.25cm}\\
		p^{(l,v)}_{2,1} \!\!&\!\! p^{(l,v)}_{2,2} \!\!&\!\! \cdots \!\!&\!\! p^{(l,v)}_{2,m_l}\\
		\vdots \!\!&\!\! \vdots \!\!&\!\! \ddots \!\!&\!\! \vdots\\
		p^{(l,v)}_{m_l,1} \!\!&\!\! p^{(l,v)}_{m_l,2} \!\!&\!\! \cdots \!\!&\!\! p^{(l,v)}_{m_l,m_l}
	\end{bmatrix}\!\!
	\begin{bmatrix}
	x^{(l'\!-\!1)}_{i_1}\\
		\vspace{-0.32cm}\\
	x^{(l'\!-\!1)}_{i_2}\\
	\vdots\\
	x^{(l'\!-\!1)}_{i_{m_l}}
	\!\!\end{bmatrix}
	\!\!+\!\!
	\begin{bmatrix}
	n^{(l'\!-\!1)}_{i_1}\\
		\vspace{-0.32cm}\\
	n^{(l'\!-\!1)}_{i_2}\\
	\vdots\\
	n^{(l'\!-\!1)}_{i_{m_l}}
	\!\!\end{bmatrix}\!,\!
\end{align}
where $p^{(l,v)}_{i,j}$ is the $(i,j)$-th element of the factor matrix $\boldsymbol{P}^{(l,v)}$, $\{i_1,i_2,\dots,i_{m_l}\}$ is a length-$m_l$ subset of $\{1,2,\dots,M\}$, and $y^{(l'-1)}_i$, $x^{(l'-1)}_i$, and $n^{(l'-1)}_i$ are the $i$-th auxiliary received signal, the data symbol, and the noise component, respectively, at the beginning of the $l'$-th (end of the ($l'-1$)-st) recursion.

In order to facilitate a low-complexity recursive detection while providing the maximum increase in the effective SNRs, we obtain the effective combining coefficients {(such as combining $y_{i}$'s in \eqref{eq9})} as follows.
Given the $m_l\times m_l$ square factor matrix $\boldsymbol{P}^{(l,v)}$, as in \eqref{eq13}, we find all $T$ possible combining matrices of size $m_l\times m_l$, with the $t$-th such matrix denoted by $\boldsymbol{\alpha}^{(l,v,t)}$, for $t=1,2,\dots,T$, such that $\boldsymbol{\alpha}^{(l,v,t)}\boldsymbol{P}^{(l,v)}$ is a diagonal matrix with the nonzero diagonal entries equal to $w_i^{(l,v,t)}$, for $i=1,2,\dots,m_l$.{\footnote{If for a given $\boldsymbol{P}^{(l,v)}$ there is no any combining matrix that results in a diagonal form, with nonzero diagonal entries, for the matrix multiplication of $\boldsymbol{\alpha}^{(l,v)}\boldsymbol{P}^{(l,v)}$, we skip that matrix and do not save its attributes. Additionally, if for a given dimension $m_l$ none of $\binom{2^{m_l}-1}{m_l}$ candidates satisfy the aforementioned property, we do not consider that dimension (i.e., $m_l\times m_l$ square factor matrices) in the design of the overall pattern matrix.}} The $(i,j)$-th entry of the matrix $\boldsymbol{\alpha}^{(l,v,t)}$ is denoted by ${\alpha}^{(l,v,t)}_{i,j}\in\{-1,0,+1\}$. Now, by linearly combining the $m_l$ auxiliary received signals in \eqref{eq13} according to the rows of $\boldsymbol{\alpha}^{(l,v,t)}$ we get $m_l$ new equations as
\begin{align}\label{Eq13combined}
\sum_{j=1}^{m_l}{\alpha}^{(l,v,t)}_{i',j} y^{(l'\!-\!1)}_{i_{j}}=w_{i'}^{(l,v,t)}x^{(l'\!-\!1)}_{i_{i'}}+\sum_{j=1}^{m_l}{\alpha}^{(l,v,t)}_{i',j} n^{(l'\!-\!1)}_{i_{j}},
\end{align}
for $i'=1,2,\dots,m_l$, where $w_{i'}^{(l,v,t)}\triangleq \sum_{j=1}^{m_l}{\alpha}^{(l,v,t)}_{i',j} p^{(l,v)}_{j,i'}$.

\begin{algorithm}[t]
	\caption{Calculation of the combining coefficients and the corresponding SNR gains.}
	\begin{algorithmic}[1]
		\STATE Input: dimension of the square factor matrix $m_l$
		\STATE Output: combining matrices $\boldsymbol{\alpha}^{(l,v)}\!=\!\!\begin{bmatrix}\alpha^{(l,v)}_{i,j}\!\end{bmatrix}_{\!m_l\!\times m_l}$ and the corresponding sets $\{\gamma_{i}^{(l,v)}\}_{i=1}^{m_l}$ of SNR gains for all $v$'s
		\vspace{0.08cm}
		\FOR {$v = 1:\binom{2^{m_l}-1}{m_l}$}
		\vspace{0.08cm}
		\STATE $\boldsymbol{P}^{(l,v)}={\begin{bmatrix}p^{(l,v)}_{i,j}\end{bmatrix}}_{m_l\times m_l}$
		\vspace{0.08cm}
		\FOR {$i= 1:m_l$}
		\STATE {\textbf{Find}} all $T$ possible sets of coefficients  $\left\{{\alpha}^{(l,v,t)}_{i,j}\right\}_{\!j=1}^{\!m_l}$ {\textbf{such that:}}
		\STATE ~~~~{\textbf{C1:}} ${\alpha}^{(l,v,t)}_{i,j}\in\{-1,0,1\}$
		\vspace{0.08cm}
		\STATE ~~~~{\textbf{C2:}} $\sum_{j=1}^{m_l}{\alpha}^{(l,v,t)}_{i,j} p^{(l,v)}_{j,i}=w_{i}^{(l,v,t)}\neq 0$
		\vspace{0.12cm}
		\STATE ~~~~{\textbf{C3:}} $\sum_{j=1}^{m_l}{\alpha}^{(l,v,t)}_{i,j} p^{(l,v)}_{j,i'}=0,~i'\neq i=1,2,\dots,m_l$
		\vspace{-0.18cm}
		\STATE {\textbf{Calculate}} $\gamma_{i}^{(l,v,t)}\triangleq\left[w_{i}^{(l,v,t)}\right]^{\!2}\!\!\!{\Big/}\!\sum_{j=1}^{m_l}\left[{\alpha}^{(l,v,t)}_{i,j}\right]^{\!2}$
		\vspace{0.08cm}
		\STATE {\textbf{Calculate}} $t^*=\underset{t=1:T}{\operatorname{argmax}} ~\gamma_{i}^{(l,v,t)}$
		\vspace{0.12cm}
		\STATE {\textbf{Output}}
		$\left\{{\alpha}^{(l,v)}_{i,j}\right\}_{\!j=1}^{\!m_l}=\left\{{\alpha}^{(l,v,t^*)}_{i,j}\right\}_{\!j=1}^{\!m_l}$
		\vspace{0.08cm}
		\STATE {\textbf{Output}}
		$\gamma_{i}^{(l,v)}=\gamma_{i}^{(l,v,t^*)}$
		\ENDFOR
		\ENDFOR
	\end{algorithmic}
\end{algorithm}

The process for selecting the combining coefficients that yield the maximum SNR gain for each auxiliary data symbol is summarized in Algorithm 1. Condition \textbf{C1} in Algorithm 1 defines the set of possible values for the combining coefficients. This implies that a specific row in $\boldsymbol{P}^{(l,v)}$ is included in the combing process with a positive/negative sign or it is not included at all. Note that scaling the set of possible values $\{-1,0,+1\}$ by a constant factor does not change the SNR gains and the performance  since it scales both the data symbols and the noise coefficients in \eqref{Eq13combined} by the same multiplicative factor. Moreover, conditions \textbf{C2} and \textbf{C3} are included to make sure that the $t$-th possible combining matrix $\boldsymbol{\alpha}^{(l,v,t)}$ results in a diagonal matrix with the nonzero diagonal entries $w_i^{(l,v,t)}$'s in $\boldsymbol{\alpha}^{(l,v,t)}\boldsymbol{P}^{(l,v)}$. This is required to guarantee singleton equations, such as \eqref{Eq13combined}, in terms of the auxiliary data symbols $x^{(l'\!-\!1)}_{i_{i'}}$'s at the end of the $l'$-th recursion.
With these constraints we can ensure that after the $l'$-th recursion the maximum number of data symbols in each equation is reduced by a factor of $m_{L-l'+1}$, resulting in much simpler sets of equations. This way we come up with a very low-complexity recursive detection algorithm described in Section \ref{Sec3B2}. Note that, except for the last recursion ($l'=L$), each auxiliary data symbol $x^{(l'\!-\!1)}_{i_{i'}}$ involves a combination of several original data symbols $x_i$'s defined according to $\boldsymbol{P}^{(1)}_{m_1\times m_1}\otimes \boldsymbol{P}^{(2)}_{m_2\times m_2} \otimes \dots \otimes \boldsymbol{P}^{(L-l')}_{m_{L-l'}\times m_{L-l'}}$, i.e., the Kronecker product of the $L-l'$ leftmost factor matrices. For instance, in Example 1, the auxiliary data symbol $x^{(0)}_1$ at the beginning of the first recursion is equal to $x_1+x_5$ (see \eqref{eq9}), which is defined according to $\boldsymbol{P}^{(1)}_{3\times 3}$ as specified in \eqref{eq10_1}.

After combining the auxiliary received signals in \eqref{eq13} according to the combining coefficients $\left\{{\alpha}^{(l,v,t)}_{i',j}\right\}_{\!j=1}^{\!m_l}$, it can be observed from \eqref{Eq13combined} that the effective SNR of the $i_{i'}$-th auxiliary data symbol $x^{(l'\!-\!1)}_{i_{i'}}$ is increased by a factor of $\gamma_{i'}^{(l,v,t)}$, defined in line 10 of Algorithm 1. This is because the auxiliary noise components in \eqref{eq13} are independent, which is clarified when we describe the detection algorithm in Section \ref{Sec3B2}. Finally, Algorithm 1, among all $T$ possible sets of coefficients $\left\{{\alpha}^{(l,v,t)}_{i,j}\right\}_{\!j=1}^{\!m_l}$ that satisfy the three constraints \textbf{C1-3} for each $i$, $v$, and $l$, picks the one that maximizes the corresponding SNR gain $\gamma_{i}^{(l,v,t)}$ as the $i$-th row of the combining matrix $\boldsymbol{\alpha}^{(l,v)}$. By repeating this procedure for all $i$'s, for $i=1,2,\dots,m_l$, we get the whole combining matrix $\boldsymbol{\alpha}^{(l,v)}\!=\!\!\begin{bmatrix}\alpha^{(l,v)}_{i,j}\!\end{bmatrix}_{\!m_l\!\times m_l}$ and the corresponding sets $\{\gamma_{i}^{(l,v)}\}_{i=1}^{m_l}$ of SNR gains given the $v$-th possible square factor matrix $\boldsymbol{P}^{(l,v)}$ for the $l$-th leftmost square factor matrix used in the Kronecker product to form the overall pattern matrix. 

\noindent{\textbf{Remark 1.}} Algorithm 1 does not output a specific square factor matrix for the $l$-th position in the Kronecker product. Instead, it finds the best combining coefficients (i.e., the combining matrix $\boldsymbol{\alpha}^{(l,v)}$) and the resulting sets of SNR gains for all $\binom{2^{m_l}-1}{m_l}$ possible square factor matrices $\boldsymbol{P}^{(l,v)}_{m_l\times m_l}$. Among all these possibilities for $v$, the \textit{best} one represented by the index $v^*$ that results in the \textit{optimal} square factor matrix $\boldsymbol{P}^{(l)}$ with the corresponding set $\{\gamma_{1}^{(l)},\gamma_{2}^{(l)},\dots,\gamma_{m_l}^{(l)}\}$ of SNR gains and the optimal combining matrix $\boldsymbol{\alpha}^{(l)}$ can be obtained by choosing the best answer set satisfying further constraints such as maximizing the average sum rate (see also Remark 3). { Note that, in the case of GMAC model considered throughout Sections III and IV, Algorithm 1 needs to run only once per dimension $m_l$ of the square factor matrix (please see Remark 9 for the clarifications on the case of fading channels). Once the \textit{best} $\boldsymbol{P}^{(l)}$ and the corresponding $\boldsymbol{\alpha}^{(l)}$ are obtained for a given dimension $m_l$, they will be known to all users and the BS. Note also that this is a part of the system design and not the detection algorithm, and one needs to also take into account the number of users $K$ and available REs $M$ in the design of the overall pattern matrix and choosing the dimensions of the underlying factor matrices.}

Finally, the following lemma is useful in the process of finding the combining coefficients described in Algorithm 1.
\begin{lemma}\label{lem1}
	For a binary square factor matrix $\boldsymbol{P}$ with linearly-independent rows, we have $T\leq 1$. In other words, there exists at most one combining matrix $\boldsymbol{\alpha}=[\alpha_{i,j}]$, $\alpha_{i,j}\in\{-1,0,1\}$, that results in a diagonal form for the matrix multiplication $\boldsymbol{\alpha}\boldsymbol{P}$ with nonzero diagonal entries.
\end{lemma}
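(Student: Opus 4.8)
The plan is to exploit invertibility of $\boldsymbol{P}$ to reduce any admissible combining matrix to a prescribed form, and then to show that the ternary-entry constraint removes essentially all remaining freedom. First I would note that, since $\boldsymbol{P}$ is square with linearly independent rows, it is invertible; and if $\boldsymbol{\alpha}\boldsymbol{P}=\boldsymbol{D}$ is diagonal with all diagonal entries nonzero, then $\boldsymbol{D}$ is itself invertible and $\boldsymbol{\alpha}=\boldsymbol{D}\boldsymbol{P}^{-1}$. In particular the $i$-th row of $\boldsymbol{\alpha}$ is a nonzero scalar multiple of the $i$-th row of $\boldsymbol{P}^{-1}$ (the scalar being the $i$-th diagonal entry of $\boldsymbol{D}$), so each row of any admissible $\boldsymbol{\alpha}$ is forced to lie on a fixed one-dimensional line. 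Equivalently, the requirement that the off-diagonal entries of $\boldsymbol{\alpha}\boldsymbol{P}$ vanish says that row $i$ of $\boldsymbol{\alpha}$ is orthogonal to the $m_l-1$ linearly independent columns of $\boldsymbol{P}$ other than the $i$-th, which determines its direction up to scaling.

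Next I would compare two admissible combining matrices $\boldsymbol{\alpha}$ and $\boldsymbol{\beta}$, say with $\boldsymbol{\alpha}\boldsymbol{P}=\boldsymbol{D}_{\alpha}$ and $\boldsymbol{\beta}\boldsymbol{P}=\boldsymbol{D}_{\beta}$ diagonal with nonzero diagonals. From the previous step $\boldsymbol{\alpha}=\boldsymbol{D}_{\alpha}\boldsymbol{D}_{\beta}^{-1}\boldsymbol{\beta}$, i.e.\ row $i$ of $\boldsymbol{\alpha}$ equals $\lambda_i$ times row $i$ of $\boldsymbol{\beta}$ for some nonzero scalar $\lambda_i$. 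Observe that no row of $\boldsymbol{\beta}$ can be all-zero (otherwise $\boldsymbol{\beta}\boldsymbol{P}$ would have a zero row, contradicting the nonzero-diagonal hypothesis), so row $i$ of $\boldsymbol{\beta}$ has an entry equal to $\pm 1$; the corresponding entry of row $i$ of $\boldsymbol{\alpha}$ equals $\pm\lambda_i$, which must lie in $\{-1,0,1\}$ and be nonzero, forcing $|\lambda_i|\le 1$. Swapping the roles of $\boldsymbol{\alpha}$ and $\boldsymbol{\beta}$ gives $|\lambda_i|\ge 1$, hence $\lambda_i=\pm 1$ for every $i$. Thus $\boldsymbol{\alpha}$ and $\boldsymbol{\beta}$ coincide up to independent sign flips of their rows.

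Finally I would observe that flipping the sign of a row of a combining matrix only flips the sign of the corresponding diagonal entry $w_i$ and of the resulting singleton equation in \eqref{Eq13combined}, leaving the SNR gain $\gamma_i^{(l,v,t)}$ of line~10 of Algorithm~1 unchanged; hence all admissible combining matrices are interchangeable for the detection algorithm, which is precisely the assertion $T\le 1$ (either no admissible combining matrix exists, in which case the candidate is discarded as in the footnote, or it is unique up to this immaterial sign ambiguity). I expect the only delicate point to be this bookkeeping --- being precise that ``at most one'' is meant modulo the trivial row-sign changes that affect neither the diagonalization property nor the SNR gains --- since the linear-algebra core (reducing each row of $\boldsymbol{\alpha}$ to a fixed one-dimensional direction via $\boldsymbol{P}^{-1}$, then pinning its scale with the ternary-entry constraint) is short and computation-free.
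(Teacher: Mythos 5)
Your proof is correct and is essentially the paper's argument: both rest on the fact that linear independence of the rows of $\boldsymbol{P}$ forces each row of an admissible $\boldsymbol{\alpha}$ to lie in a one-dimensional space, so it is unique up to a nonzero scalar, which the ternary-entry constraint together with the scaling convention of condition \textbf{C1} reduces to an immaterial sign flip. The only cosmetic difference is that you get the one-dimensionality via $\boldsymbol{\alpha}=\boldsymbol{D}\boldsymbol{P}^{-1}$, whereas the paper argues by contradiction, forming the combination $w_1\alpha^{(2)}_{i,j}-w_2\alpha^{(1)}_{i,j}$ whose vanishing products with the columns of $\boldsymbol{P}$ would give a nontrivial linear dependence among its rows; both treat the residual sign/scaling ambiguity in the same way.
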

\begin{proof}
	Please refer to Appendix \ref{AppA}.
\end{proof}

\subsubsection{Recursive Detection Algorithm}\label{Sec3B2} Suppose that the communication protocol is established between the transmitters and receivers, i.e., the design parameters such as the optimal square factor matrices $\boldsymbol{P}^{(l)}_{m_l\times m_l}$, for $l=1,2,\dots,L$, and their corresponding combining matrices $\boldsymbol{\alpha}^{(l)}_{m_l\times m_l}$'s from Algorithm 1 (also recall Remark 1) are known to the receiver.
Next, we describe the low-complexity recursive detection algorithm thanks to the underlying structure of the overall pattern matrix. 
The proposed detection algorithm, {which is schematically shown through a tree diagram in Fig. \ref{tree1},} proceeds by combining the received signals according to the combining matrix of the rightmost factor matrix and involves $L$ recursions.

\textbf{First Recursion:} The receiver in the first recursion takes the vector of received signals over $M=\prod_{l=1}^{L}m_l$ REs, i.e., $y_i$'s, for $i=1,2,\dots,M$, and divides them into $M_{L-1}\triangleq M/m_L=\prod_{l=1}^{L-1}m_l$ groups of $m_L$ received signals; therefore, the $i_L$-th group, for $i_L=1,2,\dots,M_{L-1}$, 
includes $\{y_{(i_L-1)m_L+1},y_{(i_L-1)m_L+2},\dots,y_{i_Lm_L}\}$. Each of these $y_i$'s contains at most $K=\prod_{l=1}^{L}m_l$ different transmitted symbols $x_k$'s, $k=1,2,\dots,K$, since each one is constructed as a linear combination of $x_k$'s, defined with respect to the $i$-th row of the overall pattern matrix $\boldsymbol{G}_{M \times K}$, and the noise component $n_i$ over the $i$-th RE.
Note that our recursive detection algorithm attempts to reduce the maximum number of different symbols by a factor of $m_{l''}$, $l''\triangleq L-l'+1$, after each $l'$-th recursion, such that after all $L$ recursions we have $M$ equations each containing only a single data symbol.
The receiver now combines the elements of each group using the combining matrix $\boldsymbol{\alpha}^{(L)}_{m_L\times m_L}$ to form new $m_L$ symbols at each group; the first new symbol is constructed by combining the previous symbols using the first row of $\boldsymbol{\alpha}^{(L)}$ and so on. Note that the impact of such a combining on the involved data symbols can be expressed in a matrix form through multiplying $\boldsymbol{\alpha}^{(L)}$ by a new matrix comprising the $m_L$ rows of $\boldsymbol{G}$ corresponding to the indices of each group. Then it is easy to verify that the $j_L$-th new equation of each group (which is constructed through the $j_L$-th row of $\boldsymbol{\alpha}^{(L)}$), for $j_L=1,2,\dots,m_L$, contains at most $K_{L-1}\triangleq K/m_L=\prod_{l=1}^{L-1}m_l$ different symbols from the set $\{x_{j_L},x_{j_L+m_L},\dots,x_{j_L+(K_{L-1}-1)m_L}\}$ (recall that the product of $\boldsymbol{\alpha}^{(L)}\boldsymbol{P}^{(L)}$ is a diagonal matrix), i.e., the number of unknown variables is reduced by a factor of $m_L$ from $K$ to $K_{L-1}$. Also, based on the detailed discussion in Section \ref{Sec3B1} (see, e.g., Eq. \eqref{Eq13combined}), the effective SNR of each symbol in the $j_L$-th new equation is increased by a factor of $\gamma^{(L)}_{j_L}$.

\textbf{Second Recursion:} Note that after the first recursion the $j_L$-th new equation of each group can only contain data symbols from the set $\{x_{j_L},x_{j_L+m_L},\dots,x_{j_L+(K_{L-1}-1)m_L}\}$. This means different equations of a given group contain disjoint sets of data symbols while equations with the same index of different groups (e.g., the $j_L$-th equation of all groups) contain symbols from the same set. Therefore, the receiver in the second recursion forms $m_L$ super-groups of $M_{L-1}$ equations/signals by, consecutively, placing the $j_L$-th new equation of each of those $M_{L-1}$ groups of the first recursion into the $j_L$-th super-group. 
Note that the combination of the original data symbols over each super-group (in Section \ref{Sec3B1}, we denoted each of those combinations by an auxiliary data symbol obtained after the first recursion) is defined according to the Kronecker product $\boldsymbol{P}^{(1)}_{m_1\times m_1}\otimes \boldsymbol{P}^{(2)}_{m_2\times m_2} \otimes \dots \otimes \boldsymbol{P}^{(L-1)}_{m_{L-1}\times m_{L-1}}$.
Now, the receiver follows exactly the same procedure as the first recursion over each of these disjoint $m_L$ super-groups of the size $M_{L-1}$. In other words, it divides the equations in each of the super-groups into $M_{L-2}\triangleq M_{L-1}/m_{L-1}=\prod_{l=1}^{L-2}m_l$ groups of $m_{L-1}$ equations each and combines the signals within each group using the combining matrix $\boldsymbol{\alpha}^{(L-1)}_{m_{L-1}\times m_{L-1}}$. Following the same logic, we argue that the maximum number of unknown variables at each of the new equations is reduced by a factor of $m_{L-1}$ from $K_{L-1}$ to $K_{L-2}\triangleq K_{L-1}/m_{L-1}=\prod_{l=1}^{L-2}m_l$, and the SNR of the symbols in the $j_{L-1}$-st new equation, for $j_{L-1}=1,2,\dots,m_{L-1}$, of each of the groups in the $j_L$-th super-group is increased by a factor of $\gamma^{(L-1)}_{j_{L-1}}$ from $\gamma^{(L)}_{j_L}$ in the first recursion to $\gamma^{(L-1)}_{j_{L-1}}\gamma^{(L)}_{j_L}$.

\begin{figure}[t]
	\centering
	\includegraphics[trim=0.8cm 2.2cm 2.8cm 18cm ,width=3.5in]{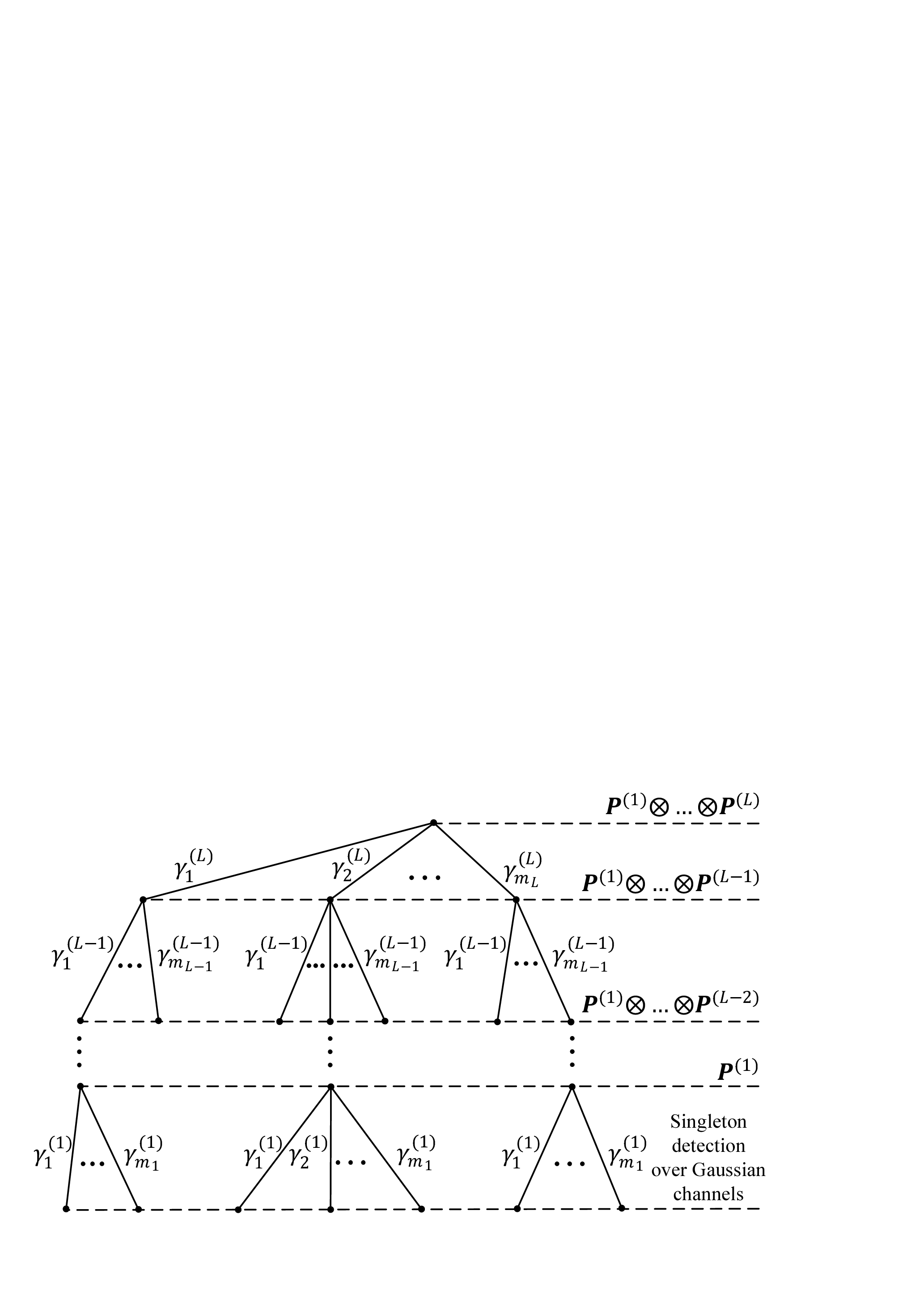}
	\caption{Tree diagram representation of the recursive detection algorithm.}
	\label{tree1}
		\vspace{-0.1in}
\end{figure}

\textbf{Final Recursion:} It can be verified by induction that in the $L$-th recursion we have $\prod_{l=2}^{L}m_l$ super-groups of size $m_1$. The sets of $m_{1}$ equations in each of these $\prod_{l=2}^{L}m_l$ super-groups is formed according to the square factor matrix $\boldsymbol{P}^{(1)}_{m_{1}\times m_{1}}$. Therefore, the receiver in the $L$-th recursion combines the $m_1$ symbols of each super-group using $\boldsymbol{\alpha}^{(1)}_{m_1\times m_1}$ to get equations containing a single unknown variable. Finally, the receiver detects the original $K=M$ data symbols according to the $M$ singleton equations in the form of a single-user additive white Gaussian noise (AWGN) channel. However, the effective SNR of each of these $K$ data symbols is increased to a desired level that can guarantee a predetermined data rate, particularly for low-capacity applications.
\subsubsection{Detection and SNR Evolution Trees}\label{Sec3B3}
For the sake of clarity, we have separately explained the second and the final recursions in Section \ref{Sec3B2}. Note that the same procedure is applied in all of the recursion steps for $l'=2,\dots,L$. In particular, the receiver at the beginning of the $l'$-th recursion, first, forms $\prod_{l=l''+1}^{L}m_l$ super-groups each containing $M_{l''}\triangleq \prod_{l=1}^{l''}m_l$ equations/signals defined according to the pattern matrix $\boldsymbol{P}^{(1)}_{m_1\times m_1}\otimes \boldsymbol{P}^{(2)}_{m_2\times m_2} \otimes \dots \otimes \boldsymbol{P}^{(l'')}_{m_{l''}\times m_{l''}}$, where $l''\triangleq L-l'+1$. The receiver then applies similar steps to the first recursion, i.e., it combines the equations/signals inside each super-group according to the rightmost combining matrix $\boldsymbol{\alpha}^{(l'')}_{m_{l''}\times m_{l''}}$ to reduce the (maximum) number of unknown data symbols included in each equation/signal by a factor of $m_{l''}$ from $K_{l''}\triangleq \prod_{l=1}^{l''}m_l$ to $K_{l''-1}$. Finally, as a result of this combining, the effective SNR of each data symbol involved in the $j_{l''}$-th combined signal, $j_{l''}=1,2,\dots,{m_{l''}}$, is increased by a factor of $\gamma^{(l'')}_{j_{l''}}$. The whole process is, schematically, shown in Fig. \ref{tree1}. Each node in the tree diagram of Fig. \ref{tree1} represents a super-group while the weights of the edges characterize the SNR gains after combining the signals inside each super-group. Moreover, the pattern matrix governing the way data symbols involved in each super-group, at the beginning and end of each recursion, are merged together is also specified in Fig. \ref{tree1} which further clarifies how after $L$ recursions we end up with $M$ singleton equations over single-user AWGN channels with increased effective SNRs.

It is worth mentioning that the $M$ overall  SNR gains obtained by multiplying the edge weights involved from the topmost node to each of the $M$ bottommost nodes in Fig. \ref{tree1} specify the set of $M$ overall SNR gains for the $M$ data symbols in an \textit{unsorted} fashion. In order to get the overall SNR gains \textit{sorted}, we need to shuffle the edge weights, i.e., the SNR gains, according to Fig. \ref{tree2}. This way we can assure that the overall SNR gain obtained by multiplying the edge weights involved from the topmost node to each $i$-th bottommost node in Fig. \ref{tree2}, for $i=1,2,\dots,M$, exactly specifies the overall SNR gain on the $i$-th data symbol $x_i$ after $L$ layers of combining. 
This can be understood from the detection algorithm elaborated in Section \ref{Sec3B2}, especially the parts emphasizing on the data symbols that remain in each new combined equation/signal and the gain those data symbols attain after each combining.
Now, according to Fig. \ref{tree2}, we have the following lemma for the overall SNR gain on the $i$-th data symbol, denoted by $\gamma_{t,i}$.
\begin{lemma}\label{lemm2_gains}
	The overall SNR gain $\gamma_{t,i}$ on the $i$-th data symbol, for $i=1,2,\dots,M$, can be obtained as
	\begin{align}\label{g_ti}
	\gamma_{t,i}=\prod_{l=1}^{L}\gamma_{s_l}^{(l)},
	\end{align}
	where the integers $s_l\in\{1,2,\dots,m_l\}$, for $l=1,2,\dots,L$, constitute a unique representation of $i$ as follows:
\begin{align}\label{s_tild}
i={s}_L+\sum_{l=1}^{L-1}({s}_{L-l}-1)\times\prod_{l_1=L-l+1}^{L}m_{l_1}.
\end{align}
Equivalently, $x_i$ is the single data symbol remained in the singleton equation of the final recursion indexed by the path $(s_L,s_{L-1},\dots,s_2,s_1)$ of super-groups  in Fig. \ref{tree1}.
\end{lemma}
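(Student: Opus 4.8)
The plan is to reduce everything to the mixed-product identity for the Kronecker product, which is precisely the algebraic fact that makes the recursive scheme separable. First I would assemble the per-factor combining matrices $\boldsymbol{\alpha}^{(l)}$ (whose existence, and the fact that $\boldsymbol{\alpha}^{(l)}\boldsymbol{P}^{(l)}$ is diagonal with nonzero diagonal entries $w^{(l)}_i$, come from Algorithm 1 and Lemma \ref{lem1}) into a single global combining matrix $\boldsymbol{\alpha}\triangleq\boldsymbol{\alpha}^{(1)}\otimes\boldsymbol{\alpha}^{(2)}\otimes\cdots\otimes\boldsymbol{\alpha}^{(L)}$ and observe that
\[
\boldsymbol{\alpha}\,\boldsymbol{G}=\bigl(\boldsymbol{\alpha}^{(1)}\boldsymbol{P}^{(1)}\bigr)\otimes\bigl(\boldsymbol{\alpha}^{(2)}\boldsymbol{P}^{(2)}\bigr)\otimes\cdots\otimes\bigl(\boldsymbol{\alpha}^{(L)}\boldsymbol{P}^{(L)}\bigr),
\]
which is a Kronecker product of diagonal matrices and hence diagonal. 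The bookkeeping step is the row-indexing correspondence: if $i$ is written in the mixed-radix form \eqref{s_tild} as the tuple $(s_1,\dots,s_L)$ with $s_l\in\{1,\dots,m_l\}$, then the $i$-th row of $\boldsymbol{\alpha}$ is the Kronecker product of the $s_l$-th rows of the $\boldsymbol{\alpha}^{(l)}$'s, and the $i$-th diagonal entry of $\boldsymbol{\alpha}\boldsymbol{G}$ equals $\prod_{l=1}^{L}w^{(l)}_{s_l}$ (notation of Section \ref{Sec3B1}). The place-value arithmetic in \eqref{s_tild} is exactly the standard mixed-radix row index of an $L$-fold Kronecker product, so this correspondence is a direct unwinding of the Kronecker definition \eqref{eq6}.

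Second, I would reconcile this global picture with the step-by-step algorithm of Section \ref{Sec3B2}: I would show, by induction on the recursion index $l'$, that the net linear map applied by the $L$ recursions is $\boldsymbol{y}\mapsto\boldsymbol{\alpha}\boldsymbol{y}$ up to a permutation of equations, and that the singleton equation sitting at the leaf reached by the path $(s_L,s_{L-1},\dots,s_1)$ in Fig. \ref{tree1} is precisely row $i$ of $\boldsymbol{\alpha}\boldsymbol{y}=\boldsymbol{\alpha}\boldsymbol{G}\boldsymbol{x}+\boldsymbol{\alpha}\boldsymbol{n}$, with $i$ given by \eqref{s_tild}. In the induction, the first recursion applies $\boldsymbol{\alpha}^{(L)}$ blockwise, restricts the symbol indices appearing in super-group $s_L$ to the residue class $s_L\pmod{m_L}$ (because $\boldsymbol{\alpha}^{(L)}\boldsymbol{P}^{(L)}$ is diagonal), and multiplies the effective SNR of every surviving symbol by $\gamma^{(L)}_{s_L}$, exactly as in \eqref{eq13}–\eqref{Eq13combined}; the re-grouping into super-groups is the reshaping that lets the next Kronecker factor act blockwise again, and each subsequent recursion peels off one more mixed-radix digit and multiplies the SNR by one more factor $\gamma^{(l'')}_{s_{l''}}$, $l''=L-l'+1$. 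After $L$ recursions the unique surviving symbol at the leaf indexed by $(s_L,\dots,s_1)$ has index $s_L+\sum_{l=1}^{L-1}(s_{L-l}-1)\prod_{l_1=L-l+1}^{L}m_{l_1}$, which is \eqref{s_tild}.

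Third, with the diagonal identity in hand, the SNR formula \eqref{g_ti} is a one-line computation. The $i$-th singleton equation reads $\bigl(\prod_{l}w^{(l)}_{s_l}\bigr)x_i+\tilde n_i$, where $\tilde n_i$ is the inner product of the $i$-th row of $\boldsymbol{\alpha}$ with the original i.i.d. noise vector, so $\mathrm{Var}(\tilde n_i)=\bigl\|\text{$i$-th row of }\boldsymbol{\alpha}\bigr\|^2\sigma^2=\prod_{l=1}^{L}\bigl\|\text{$s_l$-th row of }\boldsymbol{\alpha}^{(l)}\bigr\|^2\sigma^2$, since the squared norm of a Kronecker product of vectors is the product of the squared norms. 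Hence the effective SNR of $x_i$ equals the original SNR times $\prod_{l=1}^{L}\bigl(w^{(l)}_{s_l}\bigr)^2\big/\bigl\|\text{$s_l$-th row of }\boldsymbol{\alpha}^{(l)}\bigr\|^2=\prod_{l=1}^{L}\gamma^{(l)}_{s_l}$, using the definition of $\gamma^{(l)}_{s_l}$ from line~10 of Algorithm~1; this is \eqref{g_ti}.

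I expect the main obstacle to be the second step: turning the informal super-group/tree description into a rigorous induction. The crux is the invariant that the equations combined together at any recursion always carry noise terms that are sums over \emph{disjoint} subsets of the original $n_j$'s (so that variances genuinely add at every layer, matching the global $\boldsymbol{\alpha}\boldsymbol{n}$ picture) together with the residue-class structure of the surviving symbol indices forced by the diagonality of each $\boldsymbol{\alpha}^{(l)}\boldsymbol{P}^{(l)}$. The disjointness ultimately follows because each received signal $y_j$ is routed into exactly one group at recursion~1 and thereafter into exactly one equation per super-group, but stating the invariant precisely — index sets, place-value arithmetic, and the permutation bookkeeping — is where the real work lies; the Kronecker-algebra parts (steps one and three) are essentially mechanical.
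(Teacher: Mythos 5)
Your proposal is correct. All three ingredients check out: the mixed-product identity gives $\bigl(\boldsymbol{\alpha}^{(1)}\otimes\cdots\otimes\boldsymbol{\alpha}^{(L)}\bigr)\boldsymbol{G}=\bigl(\boldsymbol{\alpha}^{(1)}\boldsymbol{P}^{(1)}\bigr)\otimes\cdots\otimes\bigl(\boldsymbol{\alpha}^{(L)}\boldsymbol{P}^{(L)}\bigr)$, which is diagonal with $i$-th entry $\prod_l w^{(l)}_{s_l}$ under exactly the mixed-radix indexing of \eqref{s_tild}; the $i$-th row of the global combining matrix is the Kronecker product of the $s_l$-th rows of the $\boldsymbol{\alpha}^{(l)}$'s, so with i.i.d.\ noise its variance is $\sigma^2\prod_l\bigl\|\boldsymbol{\alpha}^{(l)}_{s_l,:}\bigr\|^2$, and the quotient reproduces line~10 of Algorithm~1, giving \eqref{g_ti}; and your Example-style spot checks (e.g., $i=8$, $s_1=2$, $s_2=4$, gain $4/3$) are consistent with the paper's Example~2. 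Where you differ from the paper: the paper does not introduce a global combining matrix at all; its justification of this lemma is operational, accumulating one multiplicative factor $\gamma^{(l'')}_{j_{l''}}$ per recursion along the super-group paths of Figs.~\ref{tree1}--\ref{tree2} and reading the sorted gains off the tree, with the index bookkeeping absorbed into the algorithm description of Sections \ref{Sec3B2}--\ref{Sec3B3}. Your route buys a compact closed-form derivation of both the surviving coefficient and the noise variance in one shot (no need to track variances layer by layer, and the disjoint-noise-support invariant becomes automatic once the algorithm is identified with multiplication by $\boldsymbol{\alpha}$ up to row permutations), at the price of the reconciliation induction in your second step, which is essentially the paper's informal argument made precise; the paper's per-recursion view, conversely, stays closer to how the detector actually operates and extends more directly to the variants where SIC is injected at intermediate recursions (Example~4), where the global diagonal picture no longer applies verbatim. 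The only caution is that your step~2 is indeed where all the real content sits: you must fix the row-permutation/grouping bookkeeping carefully so that the leaf reached by the path $(s_L,\dots,s_1)$ is row $i$ of $\boldsymbol{\alpha}\boldsymbol{y}$ with $i$ as in \eqref{s_tild}, but the invariants you name (blockwise action of $\boldsymbol{I}\otimes\boldsymbol{\alpha}^{(l'')}\otimes\boldsymbol{I}$, residue-class structure of surviving indices, disjointness of noise supports across the equations being combined) are the right ones and the induction goes through.
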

\begin{figure}[t]
	\centering
	\includegraphics[trim=0.4cm 3cm 1.6cm 19cm ,width=3.6in]{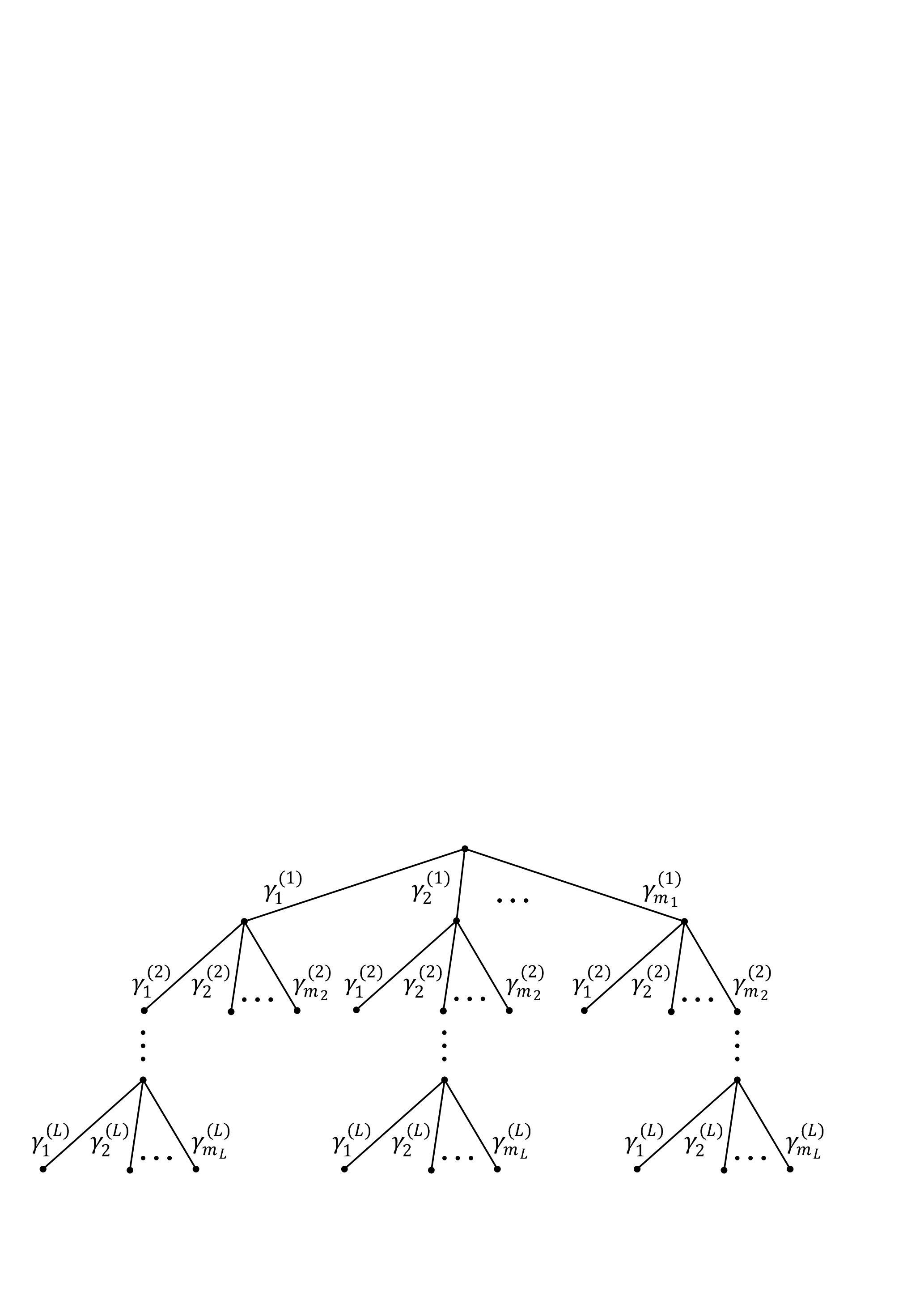}
	\caption{Tree diagram illustrating the progressive development of SNR gains.}
	\label{tree2}
		\vspace{-0.1in}
\end{figure}
\noindent{\textbf{Example 2.}} For the factor matrices considered in Example 1 and their corresponding combining matrices, according to the detailed discussions in Section \ref{Sec3B1}, we have $\gamma_{1}^{(1)}=\gamma_{2}^{(1)}=\gamma_{3}^{(1)}=\gamma_{1}^{(2)}=\gamma_{2}^{(2)}=\gamma_{3}^{(2)}=4/3$, and $\gamma_{4}^{(2)}=1$. Then, according to \Lref{lemm2_gains}, we have, for instance, $s_1=2$ and $s_2=4$ for $i=8$. Hence, $\gamma_{t,8}=\gamma_{2}^{(1)}\gamma_{4}^{(2)}=4/3$. Similarly, $\gamma_{t,4}=\gamma_{t,12}=4/3$, and $\gamma_{t,i}=(4/3)^2$, $i\notin\{4,8,12\}$.\endproofempt

\subsection{Rectangular Factor Matrices}\label{Sec3C}

In this subsection, we describe the detection algorithm when the data symbols are mixed using the Kronecker product of rectangular factor matrices over GMAC. Specifically, we consider the overall pattern matrix as $\boldsymbol{G}_{M\times K}=\boldsymbol{F}^{(1)}_{m_1\times k_1}\otimes \boldsymbol{F}^{(2)}_{m_2\times k_2} \otimes \dots \otimes \boldsymbol{F}^{(L)}_{m_L\times k_L}$, where $\boldsymbol{F}^{(l)}_{m_l\times k_l}$, for $l=1,2,\dots,L$, is an $m_l\times k_l$ binary  matrix with $k_l>m_l$. 
In this case, the overload factor $\beta= K/M$ is greater than $1$ resulting in an improved spectral efficiency compared to the case of square factor matrices considered in Section \ref{Sec3B}. We begin with the following illustrative example that helps better understand the recursive detection algorithm provided afterward. 

\noindent{\textbf{Example 3.}} Consider the transmission of $K=18$ users (symbols) over $M=4$ REs realized using the Kronecker product of the following rectangular factor matrices
\vspace{-0.05in}
\begin{align}\label{eq16}
\boldsymbol{F}^{(1)}_{1\times 2}=
\begin{bmatrix}
1 & 1
\end{bmatrix}, ~~~~ \boldsymbol{F}^{(2)}_{2\times 3}=\boldsymbol{F}^{(3)}_{2\times 3}=
\begin{bmatrix}
1 & 0 & 1 \\
1 & 1 & 0
\end{bmatrix}.
\end{align}
Hence, $\boldsymbol{y}_{4\times 1}=\boldsymbol{G}_{4\times 18}\boldsymbol{x}_{18\times 1}+\boldsymbol{n}_{4\times 1}$, with $\boldsymbol{G}_{4\times 18}=\boldsymbol{F}^{(1)}_{1\times2}\otimes\boldsymbol{F}^{(2)}_{2\times 3}\otimes\boldsymbol{F}^{(3)}_{2\times 3}$, represents the received signals vector.

The detection starts by defining $6$ auxiliary symbols $Z^{(3)}_{3i+j}$, for $i=0,1$ and $j=1,2,3$, each obtained by spreading the $j$-th data symbol of $\boldsymbol{F}^{(3)}$ according to the $(i+1)$-st row of $\boldsymbol{R}^{(2)}\triangleq \boldsymbol{F}^{(1)}\otimes\boldsymbol{F}^{(2)}$, i.e., 
\begin{align}\label{eq17}
Z^{(3)}_{j}=x_{j}+x_{j+6}+x_{j+9}+x_{j+15},\nonumber\\
Z^{(3)}_{3+j}=x_{j}+x_{j+3}+x_{j+9}+x_{j+12}.
\end{align}
Now, it is easy to observe that the original set of equations can be rewritten as the following two sets of equations each defined according to $\boldsymbol{F}^{(3)}$ as
\begin{align}\label{eq18}
\begin{bmatrix}
y_1 \\
y_2
\end{bmatrix}=\boldsymbol{F}^{(3)}_{2\times 3}\begin{bmatrix}
Z^{(3)}_{1}&
Z^{(3)}_{2}&
Z^{(3)}_{3}
\end{bmatrix}^T+\begin{bmatrix}
n_1 \\
n_2
\end{bmatrix},\nonumber\\
\begin{bmatrix}
y_3 \\
y_4
\end{bmatrix}=\boldsymbol{F}^{(3)}_{2\times 3}\begin{bmatrix}
Z^{(3)}_{4}&
Z^{(3)}_{5}&
Z^{(3)}_{6}
\end{bmatrix}^T+\begin{bmatrix}
n_3 \\
n_4
\end{bmatrix}.
\end{align}
Therefore, at the beginning of the first recursion in the receiver, the sets of equations are formed according to $\boldsymbol{F}^{(3)}_{2\times 3}$ as \eqref{eq18}.
{The receiver first detects $Z^{(3)}_{1},Z^{(3)}_{2},\dots,Z^{(3)}_{6}$, using any MUD algorithm performed over the set of equations in \eqref{eq18}, and then groups $Z^{(3)}_{j}$ with $Z^{(3)}_{3+j}$ to obtain three sets of equations each defined according to $\boldsymbol{R}^{(2)}$.}
Let us consider the first group containing $Z^{(3)}_{1}$ and $Z^{(3)}_{4}$. According to \eqref{eq17}, 
\begin{align}\label{eq19}
\begin{bmatrix}
Z^{(3)}_{1} \\
Z^{(3)}_{4}
\end{bmatrix}=\underbrace{\begin{bmatrix}
	1 & 0 & 1 \\
	1 & 1 & 0
	\end{bmatrix}}_{{\boldsymbol{F}^{(2)}_{2\times 3}}}\begin{bmatrix}
Z^{(2)}_{1}\\
Z^{(2)}_{2}\\
Z^{(2)}_{3}
\end{bmatrix},
\end{align} 
after setting $Z^{(2)}_{1}\triangleq x_{1}+x_{10}$, $Z^{(2)}_{2}\triangleq x_{4}+x_{13}$, $Z^{(2)}_{3}\triangleq x_{7}+x_{16}$ each according to $\boldsymbol{R}^{(1)}\triangleq \boldsymbol{F}^{(1)}$. Now, the receiver first detects $Z^{(2)}_{1}$, $Z^{(2)}_{2}$, and $Z^{(2)}_{3}$, {using any arbitrary MUD algorithm,} from the system of equations in \eqref{eq19} defined according to ${\boldsymbol{F}^{(2)}_{2\times 3}}$, and then detects the corresponding $6$ original data symbols from the three subsequent sets of equations each defined according to $\boldsymbol{F}^{(1)}$. Finally, by applying the same procedure to the other two groups, we detect all of the $18$ original data symbols. \endproofempt

The above example illustrates how the detection proceeds recursively by processing, at each recursion, sets of equations with smaller number of variables each formed according to one of the rectangular factor matrices.
In the following, we present the detailed description of the recursive detection algorithm over rectangular factor matrices. Same as in Section \ref{Sec3B2}, we assume that the rectangular factor matrices $\boldsymbol{F}^{(l)}_{m_l\times k_l}$, for $l=1,2,\dots,L$,
are known to the receiver. 
The proposed detection algorithm entails $L$ recursions while the $l'$-th recursion, for $l'=1,2,\dots,L$, involves sets of equations formed by the factor matrix $\boldsymbol{F}^{(l'')}_{m_{l''}\times k_{l''}}$ with $l''\triangleq L-l'+1$. Note that when $L=1$ we directly detect the $K$ data symbols mixed over $M$ REs using the single factor matrix. Hence, in the algorithm it is assumed that $L>1$.
{Also, no specific structure is imposed here on the rectangular factor matrices. Instead, we assume a generic form for the factor matrices such that a given advanced MUD algorithm can properly work to detect the unknown variables  of the $l'$-th recursion defined according to $\boldsymbol{F}^{(l'')}_{m_{l''}\times k_{l''}}$. However, one can impose further constraints on the rectangular factor matrices to improve the detection performance of the scheme at each recursion. For example, one can potentially apply the results of \cite{song2020super} to more efficiently design each of the middle rectangular factor matrices and improve the detection performance of the MUDs performed over each of them at each recursion. Further investigation on this matter is left for future studies.}
Throughout this subsection, we define $\boldsymbol{R}^{(l)}_{M_l\times K_l}\triangleq \boldsymbol{F}^{(1)}_{m_1\times k_1}\otimes \boldsymbol{F}^{(2)}_{m_2\times k_2}\otimes \dots \otimes\boldsymbol{F}^{(l)}_{m_l\times k_l}$ with $M_l\triangleq\prod_{i=1}^{l}m_i$ and $K_l\triangleq\prod_{i=1}^{l}k_i$, $l=1,2,\dots,L$.

\textbf{First Recursion:} Given that $\boldsymbol{G}_{M\times K}=\boldsymbol{R}^{(L-1)}_{M_{L-1}\times K_{L-1}}\otimes\boldsymbol{F}^{(L)}_{m_L\times k_L}$, in the first recursion, the receiver forms sets of equations each defined according to $\boldsymbol{F}^{(L)}_{m_L\times k_L}$. To this end, it detects $Q_L\triangleq k_LM_{L-1}$ new/auxiliary symbols $Z^{(L)}_{k_Li_L+j_L}$, for $i_L=0,1,\dots,M_{L-1}-1$ and $j_L=1,2,\dots,k_L$, each defined as the expansion of the $j_L$-th data symbol of $\boldsymbol{F}^{(L)}_{m_L\times k_L}$ according to the $(i_L+1)$-st row of
$\boldsymbol{R}^{(L-1)}$ as
\begin{align}\label{eq20}
\!\!Z^{(L)}_{k_Li_L+j_L}\!\!\!=\!\boldsymbol{r}^{(L-1)}_{i_L+1}\!\begin{bmatrix}
x_{j_L}\!\!&\!\!x_{j_L+k_L}\!\!&\!\!\dots\!\!&\!\!x_{j_L+(K_{L-1}-1)k_L}
\end{bmatrix}\!{^T}\!\!,\!
\end{align}
where $\boldsymbol{r}^{(L-1)}_{i_L+1}$ is the $(i_L+1)$-st row of $\boldsymbol{R}^{(L-1)}_{M_{L-1}\times K_{L-1}}$. Using \eqref{eq20}, the original set of equations $\boldsymbol{y}_{M\times 1}=\boldsymbol{G}_{M\times K}\boldsymbol{x}_{K\times 1}+\boldsymbol{n}_{M\times 1}$ can be decomposed into $M_{L-1}$ sets of equations each having $m_L$ consecutive $y_i$'s as the input and $k_L$ consecutive auxiliary symbols $Z^{(L)}_{k_Li_L+j_L}$'s as the unknown variables. The $(i_L+1)$-st such set of equations is expressed as
\begin{align}\label{eq21}
\!\begin{bmatrix}
y_{i_Lm_L+1} \\
y_{i_Lm_L+2}\\
\vdots\\
y_{(i_L+1)m_L}
\end{bmatrix}\!=\boldsymbol{F}^{(L)}_{m_L\times k_L}\!\begin{bmatrix}
Z^{(L)}_{i_Lk_L+1}\\
\vspace{-0.3cm}\\
Z^{(L)}_{i_Lk_L+2}\\
\vspace{-0.5cm}\\
\vdots\\
\vspace{-0.5cm}\\
Z^{(L)}_{(i_L+1)k_L}
\end{bmatrix}\!+\!\begin{bmatrix}
n_{i_Lm_L+1} \\
n_{i_Lm_L+2}\\
\vdots\\
n_{(i_L+1)m_L}
\end{bmatrix}\!.\!
\end{align}
Therefore, the first recursion involves $M_{L-1}$ separate sets of equations, each defined according to the rectangular factor matrix $\boldsymbol{F}^{(L)}_{m_L\times k_L}$, as specified in \eqref{eq21}, such that $Q_L$ auxiliary symbols are detected. Next, the receiver forms $k_L$ disjoint super-groups, given that all of the $M_{L-1}$ auxiliary symbols $Z^{(L)}_{k_Li_L+j_L}$, for $i_L=0,1,\dots,M_{L-1}-1$, contain the same disjoint set of data symbols for a fixed $j_L$ (see \eqref{eq20}). The $j_L$-th super-group contains $M_{L-1}$ auxiliary symbols $Z^{(L)}_{j_L},Z^{(L)}_{k_L+j_L},\dots,Z^{(L)}_{(M_{L-1}-1)k_L+j_L}$. Therefore, the set of equations in the $j_L$-th super-group can be expressed as
\begin{align}\label{eq22}
\!\!\begin{bmatrix}
Z^{(L)}_{j_L}\\
\vspace{-0.3cm}\\
Z^{(L)}_{k_L+j_L}\\
\vspace{-0.5cm}\\
\vdots\\
\vspace{-0.65cm}\\
Z^{(L)}_{(M_{L-1}\!-\!1)k_L\!+j_L}
\end{bmatrix}\!\!=\!\boldsymbol{R}^{(L-1)}_{M_{L-1}\times K_{L-1}}\!\!\begin{bmatrix}
x_{j_L} \\
x_{k_L+j_L}\\
\vdots\\
\vspace{-0.5cm}\\
x_{(K_{L-1}\!-\!1)k_L\!+j_L}
\end{bmatrix}\!\!.\!
\end{align}

\textbf{Middle Recursions ($1<l'<L$):} When $L>2$, the detection algorithm involves $L-2$ middle recursions excluding the first and the last recursions. Each of these middle recursions involves a procedure that is somewhat similar to the first recursion. For example, in the second recursion we start with $k_L$ disjoint super-groups each containing a set of equations as \eqref{eq22} and then we apply similar steps as the first recursion to this set of equations. However, since the indices of the data and the auxiliary symbols matter in the detection done at each recursion, here, we carefully describe these middle recursions. In general, at the $l'$-th recursion, for $1<l'<L$, the algorithm processes the $\kappa^{\mathcal{S}}_{l'-1}\triangleq\prod_{i=L-l'+2}^{L}k_i$ disjoint super-groups of the $(l'-1)$-st recursion separately. It then comes up with $k_{L-l'+1}$ smaller size super-groups for each of those $\kappa^{\mathcal{S}}_{l'-1}$ starting super-groups, resulting in $k_{L-l'+1}\kappa^{\mathcal{S}}_{l'-1}=\kappa^{\mathcal{S}}_{l'}$ super-groups at the end. Particularly, consider the path $(j_L,j_{L-1},\dots,j_{L-l'+2})$ of the super-groups from the previous recursions, with $j_l=1,2,\dots,k_l$ for $l=1,2,\dots,L$, (a similar tree diagram to Fig. \ref{tree1} can be considered for the recursive detection here). The indices in the path correspond to the super-groups considered from the previous recursions while there are total of $\kappa^{\mathcal{S}}_{l'-1}$ such paths, i.e., starting super-groups. By induction, the set of equations in the super-group indexed by the path $(j_L,j_{L-1},\dots,j_{L-l'+2})$ can be expressed as follows:
\begin{align}\label{eq23}
\begin{bmatrix}
Z^{(l''+1)}_{\psi_{l'-1},1}\\
\vspace{-0.3cm}\\
Z^{(l''+1)}_{\psi_{l'-1},2}\\
\vspace{-0.57cm}\\
\vdots\\
\vspace{-0.5cm}\\
Z^{(l''+1)}_{\psi_{l'-1},M_{l''}}
\end{bmatrix}=\boldsymbol{R}^{(l'')}_{M_{l''}\times K_{l''}}\begin{bmatrix}
x_{\tau_{l'-1}} \\
x_{\kappa^{\mathcal{S}}_{l'-1}+\tau_{l'-1}}\\
\vspace{-0.45cm}\\
\vdots\\
\vspace{-0.5cm}\\
x_{(K_{l''}-1)\kappa^{\mathcal{S}}_{l'-1}+\tau_{l'-1}}
\end{bmatrix}\!,\!
\end{align}
where $\psi_{l'-1}\triangleq j_{L-l'+2}+\sum_{i=1}^{l'-2}(j_{L-i+1}-1)\prod_{i'=L-l'+2}^{L-i}k_{i'}$ is the index of the super-group represented by the path $(j_L,j_{L-1},\dots,j_{L-l'+2})$, and $\tau_{l'-1}\triangleq j_L+\sum_{i=1}^{l'-2}(j_{L-i}-1)\kappa^{\mathcal{S}}_{i}$. Moreover, $Z^{(l''+1)}_{\psi_{l'-1},1},\dots,Z^{(l''+1)}_{\psi_{l'-1},M_{l''}}$ are $M_{l''}$ auxiliary symbols defined in the $(l'-1)$-st recursion, similar to \eqref{eq20}, corresponding to the $\psi_{l'-1}$-st super-group.
Note that tracking the indices of the auxiliary symbols is not needed since the algorithm processes each super-group separately. More specifically, these symbols are defined at the beginning of a recursion and they are known when the detection is done at the end of that recursion. However, the indices of the data symbols matters since we need to know the data symbols in each super-group/path and the order they appeared in the corresponding equations. 

The algorithm involves $\kappa^{\mathcal{S}}_{l'-1}$ sets of equations at the beginning of the $l'$-th recursion. Each such a set of equations is processed separately and simpler sets of equations defined according to $\boldsymbol{F}^{(l'')}_{m_{l''}\times k_{l''}}$ are derived. Let us consider the $\psi_{l'-1}$-st super-group with the set of equations given by \eqref{eq23}. 
 Similar to the first recursion, the receiver assumes
 $Q_{l''}\triangleq k_{l''}M_{L-l'}$ auxiliary symbols $Z^{(l'')}_{k_{l''}i_{l''}+j_{l''}}$, for $i_{l''}=0,1,\dots,M_{{L-l'}}-1$ and $j_{l''}=1,2,\dots,k_{l''}$, each defined as the expansion of the $j_{l''}$-th data symbol according to $\boldsymbol{r}^{(L-l')}_{i_{l''}+1}$, i.e., the $(i_{l''}+1)$-st row of $\boldsymbol{R}^{(L-l')}_{M_{L-l'}\times K_{L-l'}}$, as
\begin{align}\label{eq24}
\!\!\!Z^{({l''})}_{k_{l''}i_{l''}+j_{l''}}\!\!\!=\!\boldsymbol{r}^{(L-l')}_{i_{l''}+1}\!\begin{bmatrix}
x_{\tau_{l'}}\!\!&\!\!\!x_{\tau_{l'}+\kappa^{\mathcal{S}}_{l'}}\!\!&\!\!\!\!\dots\!\!\!\!&\!\!x_{\tau_{l'}+(K_{L-l'}\!-1)\kappa^{\mathcal{S}}_{l'}}\!
\end{bmatrix}\!\!{^T}\!\!.\!
\end{align}
Note that the $j_{l''}$-th data symbol of \eqref{eq23} is $x_{\tau_{l'}}$ since $\tau_{l'-1}+(j_{l''}-1)\kappa^{\mathcal{S}}_{l'-1}\triangleq\tau_{l'}$. Now, instead of detecting the original data symbols from rather complex equations such as \eqref{eq23}, the receiver forms $M_{L-l'}$ simpler sets of equations each defined according to the rectangular factor matrix $\boldsymbol{F}^{(l'')}_{m_{l''}\times k_{l''}}$. The $(i_{l''}+1)$-st such set of equations can be expressed as
\begin{align}\label{eq25}
\!\begin{bmatrix}
Z^{(l''+1)}_{\psi_{l'-1},i_{l''}m_{l''}+1}\\
\vspace{-0.3cm}\\
Z^{(l''+1)}_{\psi_{l'-1},i_{l''}m_{l''}+2}\\
\vspace{-0.5cm}\\
\vdots\\
\vspace{-0.6cm}\\
Z^{(l''+1)}_{\psi_{l'-1},(i_{l''}+1)m_{l''}}
\end{bmatrix}=\boldsymbol{F}^{(l'')}_{m_{l''}\times k_{l''}}\begin{bmatrix}
Z^{(l'')}_{i_{l''}k_{l''}+1}\\
\vspace{-0.3cm}\\
Z^{(l'')}_{i_{l''}k_{l''}+2}\\
\vspace{-0.5cm}\\
\vdots\\
\vspace{-0.5cm}\\
Z^{(l'')}_{(i_{l''}+1)k_{l''}}
\end{bmatrix}.
\end{align}
After detecting the $Q_{l''}$ auxiliary symbols $Z^{(l'')}_{k_{l''}i_{l''}+j_{l''}}$'s from the sets of equations as \eqref{eq25}, the receiver then forms $k_{l''}$ disjoint super-groups, given that all of the $M_{L-l'}$ auxiliary symbols $Z^{(l'')}_{k_{l''}i_{l''}+j_{l''}}$, $i_{l''}=0,1,\dots,M_{L-l'}-1$, contain the same disjoint set of data symbols for a fixed $j_{l''}$. The set of equations in the $j_{l''}$-th super-group can be expressed as
\begin{align}\label{eq26}
\begin{bmatrix}
Z^{(l'')}_{\psi_{l'},1}\\
\vspace{-0.3cm}\\
Z^{(l'')}_{\psi_{l'},2}\\
\vspace{-0.55cm}\\
\vdots\\
\vspace{-0.55cm}\\
Z^{(l'')}_{\psi_{l'},M_{l''-1}}
\end{bmatrix}\!=\!\boldsymbol{R}^{(l''-1)}_{M_{l''-1}\times K_{l''-1}}\begin{bmatrix}
x_{\tau_{l'}} \\
x_{\kappa^{\mathcal{S}}_{l'}+\tau_{l'}}\\
\vdots\\
x_{(K_{l''-1}-1)\kappa^{\mathcal{S}}_{l'}+\tau_{l'}}
\end{bmatrix},
\end{align}
given that $l''-1=L-l'$. Note that in \eqref{eq26}  $Z^{(l'')}_{k_{l''}i_{l''}+j_{l''}}$ is denoted by $Z^{(l'')}_{\psi_{l'},i_{l''}+1}$ in order to be consistent with \eqref{eq23} given that the index of the super-group represented by the path $(j_L,j_{L-1},\dots,j_{L-l'+2},j_{L-l'+1})$ is $\psi_{l'}\triangleq j_{L-l'+1}+\sum_{i=1}^{l'-1}(j_{L-i+1}-1)\prod_{i'=L-l'+1}^{L-i}k_{i'}$. Finally, repeating the same procedure for all of the $\kappa^{\mathcal{S}}_{l'-1}$ starting super-groups results in $k_{L-l'+1}\kappa^{\mathcal{S}}_{l'-1}=\kappa^{\mathcal{S}}_{l'}$ super-groups, each of the form given by \eqref{eq26}, at the end of the $l'$-th recursion.

\textbf{Final Recursion:} The receiver in the $L$-th recursion starts with $\kappa^{\mathcal{S}}_{L-1}\triangleq\prod_{i=2}^{L}k_i$ super-groups each containing a disjoint subset of the original data symbols of size $k_1$. The set of equations in the $\psi_{L-1}$-st such super-group, where $\psi_{L-1}\triangleq j_{2}+\sum_{i=1}^{L-2}(j_{L-i+1}-1)\prod_{i'=2}^{L-i}k_{i'}$, indexed by the path $(j_L,j_{L-1},\dots,j_{2})$ can be expressed as (see, e.g., \eqref{eq23}):
\begin{align}\label{eq27}
\begin{bmatrix}
Z^{(2)}_{\psi_{L-1},1}\\
\vspace{-0.35cm}\\
Z^{(2)}_{\psi_{L-1},2}\\
\vspace{-0.55cm}\\
\vdots\\
\vspace{-0.55cm}\\
Z^{(2)}_{\psi_{L-1},m_1}
\end{bmatrix}=\boldsymbol{F}^{(1)}_{m_1\times k_1}\begin{bmatrix}
x_{\tau_{L-1}} \\
x_{\kappa^{\mathcal{S}}_{L-1}+\tau_{L-1}}\\
\vspace{-0.45cm}\\
\vdots\\
\vspace{-0.45cm}\\
x_{(k_1-1)\kappa^{\mathcal{S}}_{L-1}+\tau_{L-1}}
\end{bmatrix},
\end{align}
where $\tau_{L-1}\triangleq j_L+\sum_{i=1}^{L-2}(j_{L-i}-1)\kappa^{\mathcal{S}}_{i}$, and $Z^{(2)}_{\psi_{L-1},1},\dots,Z^{(2)}_{\psi_{L-1},m_1}$ are $m_1$ auxiliary symbols defined in the $(L-1)$-st recursion and their values are known at the beginning of the $L$-th recursion. Finally, after detecting the $k_1$ disjoint data symbols of each of the $\kappa^{\mathcal{S}}_{L-1}$ super-groups of the form \eqref{eq27}, the receiver obtains the values of all $K$ data symbols. Note that the indices of the data symbols contained in each super-group is fully known according to \eqref{eq27}.

\noindent{\textbf{Remark 2.}} The recursive algorithm proposed in this subsection can readily be applied to the Kronecker product of any general square factor matrices by inserting $k_l=m_l$, $l=1,2,..,L$. In that case, any given MUD scheme can be applied to sets of equations defined according to the corresponding factor matrix at each recursion. This is while the algorithm presented in Section \ref{Sec3B2} provides a very low-complexity detection.

\subsection{General Pattern Matrices}\label{Sec3D}
In this subsection, we turn our attention to the general case of pattern matrices by building upon the analysis in Sections \ref{Sec3B} and \ref{Sec3C} specifically on square and rectangular factor matrices, respectively. In particular, the general case of the pattern matrix is considered as
\begin{align}\label{eq28}
\boldsymbol{G}_{M\times K}=&\boldsymbol{F}^{(1)}_{m_1\times k_1}\otimes\boldsymbol{F}^{(2)}_{m_2\times k_2}\otimes \dots \otimes \boldsymbol{F}^{({L_r})}_{m_{L_r}\times k_{L_r}}\nonumber\\
&\!\otimes\boldsymbol{P}^{(1)}_{m'_1\times m'_1}\!\otimes\boldsymbol{P}^{(2)}_{m'_2\times m'_2}\otimes \dots \otimes \boldsymbol{P}^{({L_s})}_{m'_{L_s}\times m'_{L_s}},
\end{align}
where $L_r$ is the number of rectangular factor matrices $\boldsymbol{F}^{({l_r})}_{m_{l_r}\times k_{l_r}}$'s, for ${l_r}=1,2,\dots,L_r$, and 
$L_s$ is the number of square factor matrices $\boldsymbol{P}^{({l_s})}_{m'_{l_s}\times m'_{l_r}}$'s, for ${l_s}=1,2,\dots,L_s$, in the scheme. Moreover, let $M_r\triangleq \prod_{l_r=1}^{L_r}m_{l_r}$, $K_r\triangleq \prod_{l_r=1}^{L_r}k_{l_r}$, and $M_s\triangleq \prod_{l_s=1}^{L_s}m'_{l_s}$. Then $M=M_rM_s$, $K=K_rM_s$, and the overall overload factor $\beta$ is $K_r/M_r>1$. This is because $k_{l_r}>m_{l_r}$ for ${l_r}=1,2,\dots,L_r$. In addition to the structure of the overall pattern matrix given by \eqref{eq28}, it is assumed that the combining matrices $\boldsymbol{\alpha}^{(l_s)}_{m'_{l_s}\times m'_{l_s}}$'s corresponding to each $\boldsymbol{P}^{({l_s})}_{m'_{l_s}\times m'_{l_r}}$ are also known to the receiver.
As discussed in Section \ref{Sec3B2}, our proposed detection algorithm with only the Kronecker product of $L_s$ square factor matrices as the pattern matrix results in $M_s$ singleton equations with the SNR gain on each equation obtained using \Lref{lemm2_gains}.\footnote{{Note that the proposed detection algorithm for the Kronecker product of square factor matrices in Section \ref{Sec3B} is different than the proposed detection algorithm in Section \ref{Sec3C} for the Kronecker product of rectangular factor matrices.
		Indeed, the former cannot be obtained as a special case of the latter simply by inserting $k_{l}=m_{l}$. In particular, we carefully designed the square factor matrices and the corresponding combining matrices to enable a much more efficient detection algorithm for the Kronecker product of square factor matrices.}} The detection algorithm for the general case of the pattern matrix is then summarized as the following two phases each incurring $L_s$ and $L_r$ recursions, respectively.

\textbf{Phase I:} In the first phase, the receiver applies $L_s$ recursions, by following combining and (super-) grouping as described in Section \ref{Sec3B2}, to the system of equation $\boldsymbol{y}_{M\times 1}=\boldsymbol{G}_{M\times K}\boldsymbol{x}_{K\times 1}+\boldsymbol{n}_{M\times 1}$ with $\boldsymbol{G}_{M\times K}$ defined in \eqref{eq28}. Then the receiver obtains $M_s$ super-groups each containing a set of equations defined according to the pattern matrix $\boldsymbol{F}_{M_r\times K_r}\triangleq\boldsymbol{F}^{(1)}_{m_1\times k_1}\otimes\boldsymbol{F}^{(2)}_{m_2\times k_2}\otimes \dots \otimes \boldsymbol{F}^{({L_r})}_{m_{L_r}\times k_{L_r}}$.
The set of equations in the $\psi'_{L_s}$-th super-group, where $\psi'_{L_s}\triangleq j_{1}+\sum_{i=1}^{L_s-1}(j_{L_s-i+1}-1)\prod_{i'=1}^{L_s-i}m'_{i'}$, indexed by the path of super-groups $(j_{L_s},j_{L_s-1},\dots,j_2,j_1)$, for $j_{l_s}=1,2,\dots,m'_{l_s}$, is expressed as
\begin{align}\label{eq29}
\!\!\begin{bmatrix}
\tilde{y}_{\psi'_{L_s},1} \\
\tilde{y}_{\psi'_{L_s},2}\\
\vspace{-0.55cm}\\
\vdots\\
\vspace{-0.65cm}\\
\tilde{y}_{\psi'_{L_s},M_r}
\end{bmatrix}\!\!=\!\boldsymbol{F}_{M_r\times K_r}\!\begin{bmatrix}
x_{\tau'_{L_s}} \\
x_{M_s+\tau'_{L_s}}\\
\vspace{-0.55cm}\\
\vdots\\
\vspace{-0.65cm}\\
x_{(K_{r}-1)M_s+\tau'_{L_s}}
\end{bmatrix}\!\!+\!\!\begin{bmatrix}
\tilde{n}_{\psi'_{L_s},1} \\
\tilde{n}_{\psi'_{L_s},2}\\
\vspace{-0.55cm}\\
\vdots\\
\vspace{-0.65cm}\\
\tilde{n}_{\psi'_{L_s},M_r}
\end{bmatrix}\!,\!
\end{align}
where $\tau'_{L_s}\triangleq j_{L_s}+\sum_{i=1}^{L_s-1}(j_{L_s-i}-1)\prod_{i'=L_s-i+1}^{L_s}m'_{i'}$. Moreover, $\tilde{y}_{\psi'_{L_s},1}, \tilde{y}_{\psi'_{L_s},2},\dots,\tilde{y}_{\psi'_{L_s},M_r}$ are the combined versions of the channel outputs in $\boldsymbol{y}_{M\times 1}$, after $L_s$ rounds of combining, and their values are fully known to the receiver at the end of Phase I. Furthermore, $\tilde{n}_{\psi'_{L_s},1}, \tilde{n}_{\psi'_{L_s},2},\dots,\tilde{n}_{\psi'_{L_s},M_r}$ are independent noise terms each having a variance equal to $\sigma^2/\gamma_{t,\tau'_{L_s}}$. Note that, as a result of $L_s$ square factor matrices, the effective SNR of the equations in the $\psi'_{L_s}$-th super-group in \eqref{eq29} is increased by a factor of $\gamma_{t,\tau'_{L_s}}$, where, by \Lref{lemm2_gains},\footnote{Observe that $j_{l_s}$ here has the same meaning as $s_{l_s}$ defined in \Lref{lemm2_gains} and $\tau'_{L_s}$ has the same value as the variable $i$ defined in Eq. \eqref{s_tild}.} $\gamma_{t,\tau'_{L_s}}\triangleq \prod_{l_s=1}^{L_s}\gamma_{j_{l_s}}^{(l_s)}$ with $\gamma_{j_{l_s}}^{(l_s)}$ defined the same way as in Section \ref{Sec3B} for $\boldsymbol{P}^{({l_s})}_{m'_{l_s}\times m'_{l_r}}$.

\textbf{Phase II:} The receiver in the second phase separately processes each of the $M_s$ disjoint sets of equations of the form given by \eqref{eq29}. According to the detailed description of the detection algorithm in Section \ref{Sec3C}, the receiver, after applying $L_r$ recursions, detects the $K_r$ symbols of each of the aforementioned $M_s$ sets of equations resulting in $K=K_rM_s$ data symbols in total.

\section{Performance Characterization}\label{Sec4}
\subsection{Average Sum-Rate}\label{Sec4A}
Based on the detailed analysis provided in Section \ref{Sec3}, we have the following theorem on the average sum-rate of the proposed scheme over GMAC{\footnote{Note that, in general, the average sum-rate should depend on the statistical properties of the channel. One needs to account on our explanations in Section \ref{Sec5A} to extend \Tref{sum_rate_thm} to the case of fading channels.}}.
\begin{theorem}\label{sum_rate_thm}
The per-RE average sum-rate $C_M$ of the proposed code-domain NOMA with the general pattern matrix $\boldsymbol{G}_{M \times K}$, specified in \eqref{eq28}, and the proposed recursive detection algorithm is expressed as
\begin{align}\label{eq30}
C_M=\frac{1}{2M}\sum_{j_{L_s}=1}^{m'_{L_s}}&~\!\sum_{j_{L_s-1}=1}^{m'_{L_s-1}}\dots\sum_{j_{2}=1}^{m'_{2}}\sum_{j_{1}=1}^{m'_{1}}
\nonumber\\
&\log_2\det\!\left({\boldsymbol{I}}_{M_r}+\rho\prod_{l_s=1}^{L_s}\gamma_{j_{l_s}}^{(l_s)}\boldsymbol{F}\boldsymbol{F}^T\right),
\end{align}
where ${\boldsymbol{I}}_{M_r}$ is an $M_r\times M_r$ identity matrix, $\boldsymbol{F}_{M_r\times K_r}\triangleq\boldsymbol{F}^{(1)}_{m_1\times k_1}\otimes\boldsymbol{F}^{(2)}_{m_2\times k_2}\otimes \dots \otimes \boldsymbol{F}^{({L_r})}_{m_{L_r}\times k_{L_r}}$, $\rho\triangleq P_x/\sigma^2$ with $P_x\triangleq\E\left[x^2_j\right]$, for $j=1,2,\dots,K$, and $\sigma^2=\E\left[n^2_i\right]$, for $i=1,2,\dots,M$.
\end{theorem}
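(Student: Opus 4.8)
The plan is to read off, from the algorithmic decomposition of Section \ref{Sec3D}, that the proposed two-phase detector turns the Gaussian MAC $\boldsymbol{y}_{M\times1}=\boldsymbol{G}_{M\times K}\boldsymbol{x}_{K\times1}+\boldsymbol{n}_{M\times1}$ into $M_s\triangleq\prod_{l_s=1}^{L_s}m'_{l_s}$ parallel Gaussian vector multiple-access channels that are decoded separately and that carry disjoint user sets; the per-RE sum-rate is then the sum of the sum-capacities of these $M_s$ sub-channels, divided by the $M$ resource elements. Concretely, Phase I (the $L_s$ square-factor recursions) produces one super-group per path $(j_{L_s},j_{L_s-1},\dots,j_1)$ with $j_{l_s}\in\{1,\dots,m'_{l_s}\}$, and Phase II is run independently inside each super-group; so the target is to identify each super-group's channel and then apply the standard Gaussian-MAC sum-capacity formula.

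First I would record the exact channel seen in one super-group. By the Phase-I analysis culminating in \eqref{eq29}, together with \Lref{lemm2_gains}, the super-group indexed by $(j_{L_s},\dots,j_1)$ has the form $\tilde{\boldsymbol{y}}^{(b)}=\boldsymbol{F}\,\tilde{\boldsymbol{x}}^{(b)}+\tilde{\boldsymbol{n}}^{(b)}$, where $\boldsymbol{F}\triangleq\boldsymbol{F}_{M_r\times K_r}=\boldsymbol{F}^{(1)}_{m_1\times k_1}\otimes\dots\otimes\boldsymbol{F}^{(L_r)}_{m_{L_r}\times k_{L_r}}$, the vector $\tilde{\boldsymbol{x}}^{(b)}\in\R^{K_r}$ collects the $K_r$ original data symbols $x_{\tau'_{L_s}},x_{M_s+\tau'_{L_s}},\dots,x_{(K_r-1)M_s+\tau'_{L_s}}$ of that super-group, and $\tilde{\boldsymbol{n}}^{(b)}$ has independent components, each of variance $\sigma^2/\gamma_{t,\tau'_{L_s}}$ with $\gamma_{t,\tau'_{L_s}}=\prod_{l_s=1}^{L_s}\gamma_{j_{l_s}}^{(l_s)}$. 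The facts I would import here, all established in Section \ref{Sec3D}, are: (i) the $L_s$ rounds of combining leave every one of the $M_r$ equations of a given super-group with the \emph{same} effective SNR boost $\prod_{l_s}\gamma_{j_{l_s}}^{(l_s)}$; (ii) the combined noise terms within a super-group are mutually independent with that common variance; and (iii) the $M_s$ super-groups partition the $K$ users (indeed $M_sK_r=K$), so summing over them counts each user exactly once.

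Next I would compute the sum-rate of a single such sub-channel. Taking the $K_r$ inputs i.i.d.\ with second moment $P_x$ (Gaussian being the sum-rate-maximizing choice under a per-symbol power constraint) and the noise i.i.d.\ Gaussian of variance $\sigma_n^2\triangleq\sigma^2/\gamma_{t,\tau'_{L_s}}$, the standard jointly-Gaussian differential-entropy calculation for the real vector MAC gives
\begin{align}
I\big(\tilde{\boldsymbol{x}}^{(b)};\tilde{\boldsymbol{y}}^{(b)}\big)
&=\tfrac12\log_2\frac{\det\big(P_x\boldsymbol{F}\boldsymbol{F}^T+\sigma_n^2\boldsymbol{I}_{M_r}\big)}{\det\big(\sigma_n^2\boldsymbol{I}_{M_r}\big)}\nonumber\\
&=\tfrac12\log_2\det\!\Big(\boldsymbol{I}_{M_r}+\tfrac{P_x}{\sigma_n^2}\boldsymbol{F}\boldsymbol{F}^T\Big),
\end{align}
where the last step is just factoring out $\sigma_n^2\boldsymbol{I}_{M_r}$ and using $\det(\sigma_n^2\boldsymbol{I}_{M_r})=\sigma_n^{2M_r}$. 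Substituting $\sigma_n^2=\sigma^2/\gamma_{t,\tau'_{L_s}}$ gives $P_x/\sigma_n^2=\rho\,\gamma_{t,\tau'_{L_s}}=\rho\prod_{l_s=1}^{L_s}\gamma_{j_{l_s}}^{(l_s)}$, so this super-group contributes $\tfrac12\log_2\det\!\big(\boldsymbol{I}_{M_r}+\rho\prod_{l_s}\gamma_{j_{l_s}}^{(l_s)}\boldsymbol{F}\boldsymbol{F}^T\big)$ to the total.

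Finally, since the $M_s$ super-groups are decoded independently and cover disjoint user sets, the total achievable rate summed over all $K$ users is the sum of these contributions over all paths, i.e.\ the nested sum $\sum_{j_{L_s}=1}^{m'_{L_s}}\cdots\sum_{j_1=1}^{m'_1}$; dividing by $M$ yields exactly \eqref{eq30}. I expect the crux to be the first step: faithfully transferring from Section \ref{Sec3D} and \Lref{lemm2_gains} the claim that Phase I realizes precisely these $M_s$ \emph{decoupled} Gaussian MACs, with a common per-super-group SNR gain across all $M_r$ of its equations and mutually independent combined-noise terms, so that no residual cross-super-group coupling survives and each sub-channel is a genuine Gaussian MAC; once that structural statement is in hand, everything else is the textbook Gaussian-MAC sum-capacity computation. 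It is also worth noting that this quantity is in general strictly below the unconstrained GMAC value $\tfrac1{2M}\log_2\det(\boldsymbol{I}_M+\rho\,\boldsymbol{G}\boldsymbol{G}^T)$, because the $\pm1$ combining weights dissipate signal energy; hence the proof genuinely must go through the algorithmic decomposition rather than a direct determinant evaluation of $\boldsymbol{G}\boldsymbol{G}^T$.
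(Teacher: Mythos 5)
Your proposal is correct and follows essentially the same route as the paper: Phase I of Section \ref{Sec3D} yields the $M_s$ decoupled super-group channels of \eqref{eq29} with common SNR boost $\gamma_{t,\tau'_{L_s}}=\prod_{l_s}\gamma_{j_{l_s}}^{(l_s)}$ from \Lref{lemm2_gains}, each is treated as a Gaussian MAC with pattern matrix $\boldsymbol{F}_{M_r\times K_r}$, and the per-path $\tfrac12\log_2\det\bigl(\boldsymbol{I}_{M_r}+\rho\gamma_{t,\tau'_{L_s}}\boldsymbol{F}\boldsymbol{F}^T\bigr)$ terms are averaged over the $M_s$ paths and normalized by $M$. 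The only (cosmetic) difference is that the paper simply cites the known PDMA per-RE sum-rate formula with optimal MAP detection, whereas you re-derive the same log-det expression from the standard Gaussian mutual-information calculation.
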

\begin{proof}
It is well known that for a PDMA system with the pattern matrix $\boldsymbol{A}_{M\times K}$ and the SNR of $\rho$ for all of the received original data symbols, the per-RE average sum rate is given by $C^{\rm PDMA}_M=\frac{1}{2M}\log_2\det\left({\boldsymbol{I}}_{M}+\rho\boldsymbol{A}\boldsymbol{A}^T\right)$ for the optimal MAP detection \cite{shental2017low}. As elaborated in Phase I of the detection algorithm in Section \ref{Sec3D}, we end up with $M_s$ sets of equation each defined according to \eqref{eq29}. Therefore, the per-RE average sum-rate $C_{M_r}(\psi'_{L_s})$ of the $\psi'_{L_s}$-th super-group, indexed by the path $(j_{L_s},j_{L_s-1},\dots,j_2,j_1)$, is obtained as
\begin{align}\label{eq31}
C_{M_r}(\psi'_{L_s})=\frac{1}{2M_r}\log_2\det\left({\boldsymbol{I}}_{M_r}+\rho\gamma_{t,\tau'_{L_s}}\boldsymbol{F}\boldsymbol{F}^T\right).
\end{align}
Averaging \eqref{eq31} over all $M_s$ paths completes the proof.
\end{proof}

In the special case with all the square factor matrices being identical, i.e., $\boldsymbol{P}^{({l_s})}_{m'_{l_s}\times m'_{l_r}}=\boldsymbol{P}_{m_p\times m_p}$, for ${l_s}=1,2,\dots,L_s$, we have the following corollary given that $\gamma_{j}^{(l_s)}=\gamma_{j}$, for $j=1,2,\dots,m_p$.
\begin{corollary}\label{sum_rate_cor}
	The per-RE average sum-rate of the proposed code-domain NOMA with the pattern matrix $\boldsymbol{G}_{M \times K}=\boldsymbol{F}_{M_r\times K_r}\otimes\boldsymbol{P}_{m_p\times m_p}^{\otimes L_s}$, where $\boldsymbol{P}^{\otimes L_s}$ denotes the $L_s$-times Kronecker product of $\boldsymbol{P}$ with itself, is given by
	\begin{align}\label{eq32}
	C_M=&\frac{1}{2M}\sum_{r_1+r_2+\dots+r_{m_p}=L_s}~\frac{L_s!}{r_1!r_2!\dots r_{m_p}!}\nonumber\\
	&\times\log_2\det\left({\boldsymbol{I}}_{M_r}+\rho\gamma_1^{r_1}\gamma_2^{r_2}\dots\gamma_{m_p}^{r_{m_p}}\boldsymbol{F}\boldsymbol{F}^T\right),
	\end{align}
where the summation is over all disjoint sets $\{r_1,r_2,\dots,r_{m_p}:r_i'\in\{0,1,\dots,L_s\}, \sum_{i'=1}^{m_p}r_{i'}=L_s\}$.
\end{corollary}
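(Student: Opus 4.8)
The plan is to obtain Corollary~\ref{sum_rate_cor} as an immediate specialization of Theorem~\ref{sum_rate_thm} followed by a purely combinatorial regrouping of its nested summation. First I would substitute $\boldsymbol{P}^{(l_s)}_{m'_{l_s}\times m'_{l_s}}=\boldsymbol{P}_{m_p\times m_p}$ for every $l_s=1,2,\dots,L_s$ into \eqref{eq30}. Since all the square factor matrices then coincide, we have $m'_{l_s}=m_p$ and $\gamma_j^{(l_s)}=\gamma_j$ for all $l_s$ and all $j=1,2,\dots,m_p$, so \eqref{eq30} collapses to a sum over the $m_p^{L_s}$ tuples $(j_1,\dots,j_{L_s})\in\{1,\dots,m_p\}^{L_s}$,
\begin{align*}
C_M=\frac{1}{2M}\sum_{j_1=1}^{m_p}\cdots\sum_{j_{L_s}=1}^{m_p}\log_2\det\!\left({\boldsymbol{I}}_{M_r}+\rho\left(\prod_{l_s=1}^{L_s}\gamma_{j_{l_s}}\right)\boldsymbol{F}\boldsymbol{F}^T\right),
\end{align*}
where the original order of the summation indices is irrelevant because the summand is invariant under permutations of the tuple.

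The crucial observation is that the summand depends on the tuple $(j_1,\dots,j_{L_s})$ only through the product $\prod_{l_s=1}^{L_s}\gamma_{j_{l_s}}$, and this product is symmetric in the entries of the tuple: it is determined solely by the \emph{type} of the tuple, i.e., by the vector $(r_1,\dots,r_{m_p})$ in which $r_i$ counts the indices $l_s$ with $j_{l_s}=i$. For any tuple of type $(r_1,\dots,r_{m_p})$ we have $\prod_{l_s=1}^{L_s}\gamma_{j_{l_s}}=\gamma_1^{r_1}\gamma_2^{r_2}\cdots\gamma_{m_p}^{r_{m_p}}$, and the admissible types are exactly the nonnegative integer solutions of $r_1+r_2+\dots+r_{m_p}=L_s$ with each $r_i\in\{0,1,\dots,L_s\}$. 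I would therefore partition $\{1,\dots,m_p\}^{L_s}$ according to type and rewrite the $L_s$-fold sum as a single sum over these types.

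The final step is the standard multinomial count: the number of tuples in $\{1,\dots,m_p\}^{L_s}$ of a prescribed type $(r_1,\dots,r_{m_p})$ equals $\frac{L_s!}{r_1!r_2!\cdots r_{m_p}!}$, obtained by choosing which $r_1$ of the $L_s$ coordinates take the value $1$, which $r_2$ take the value $2$, and so on. Multiplying the common value of the summand for a given type by this multiplicity and summing over all admissible types reproduces exactly \eqref{eq32}, which completes the argument. There is no substantive obstacle in this corollary; the only point deserving a line of care is checking that the type-based partition of $\{1,\dots,m_p\}^{L_s}$ is exhaustive and disjoint and that the multinomial coefficient counts precisely the tuples of each type. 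As a consistency check one may note that $\sum_{r_1+\dots+r_{m_p}=L_s}\frac{L_s!}{r_1!\cdots r_{m_p}!}=m_p^{L_s}$ by the multinomial theorem, matching the number of terms in the nested sum displayed above.
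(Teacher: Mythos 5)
Your proposal is correct and follows essentially the same route the paper intends: specialize \eqref{eq30} to identical square factor matrices so that $\gamma_j^{(l_s)}=\gamma_j$, and then collapse the $L_s$-fold sum by grouping tuples according to their type with the multinomial coefficient $\frac{L_s!}{r_1!\cdots r_{m_p}!}$, which is exactly how \eqref{eq32} arises from \Tref{sum_rate_thm}. The consistency check $\sum \frac{L_s!}{r_1!\cdots r_{m_p}!}=m_p^{L_s}$ is a nice touch but not needed.
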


\noindent{\textbf{Remark 3 (Optimal Factor Matrices).}} When the design target is to maximize the average sum rate, the design of the \textit{optimal pattern matrix} can be formulated as the selection of an appropriate  rectangular matrix $\boldsymbol{F}^*$ (that can properly trade off between the performance and complexity) together with the $v_{l_s}^*$-th square matrices, $l_s=1,2,\dots,L_s$, from Algorithm 1 with the corresponding set of SNRs $\{\gamma^{(l_s)}_{v_{l_s}^*,1},\gamma^{(l_s)}_{v_{l_s}^*,2},\dots,\gamma^{(l_s)}_{v_{l_s}^*,m_{l_s}}\}$, such that they result in the maximum sum rate in \eqref{eq30}.

\noindent{\textbf{Remark 4 (Rate-Reliability Trade-off).}} After the detection of a \textit{properly} chosen subset of data symbols, one can successively cancel the interference on some detection equations such that higher SNR gains are achieved for the detection of some remained symbols (see Example 4). This, in addition to increasing the average sum rate in \eqref{eq30}, may also increase the error probability due to the error propagation issue in SIC detection, resulting in a trade-off between the rate and reliability. This mechanism also increases the latency because we should wait for the detection of some symbols before starting the detection of the others. 
	
	\noindent{\textbf{Example 4.}} For the system parameters in Example 1, by using \Tref{sum_rate_thm} and Example 2 the average sum rate is $C_{12}=(1/8)\log_2\left(1+(4/3)\rho\right)+(3/8)\log_2\left(1+(4/3)^2\rho\right)$. However, 
	after detecting $x_1$ and $x_5$ from the first two equations in \eqref{eq10}, we can apply SIC to form $\tilde{y}_9^{(2)}\triangleq {y}_5^{(1)}+{y}_9^{(1)}-2x_1-2x_5=4x_9+{n}_5^{(1)}+{n}_9^{(1)}$, instead of the third equation in \eqref{eq10}. Applying the same procedure to all of the four sets of equations defined according  to $\boldsymbol{P}^{(1)}_{3\times 3}$, increases the SNR gain $\gamma^{(1)}_{3}$ on  the third data symbols from $4/3$ to $2$. Therefore, the average sum rate is increased to $C_{12}=(1/12)\log_2\left(1+(4/3)\rho\right)+(1/4)\log_2\left(1+(4/3)^2\rho\right)+(1/8)\log_2\left(1+(8/3)\rho\right)+(1/24)\log_2\left(1+2\rho\right)$.\endproofempt
\subsection{Latency Analysis}\label{Sec4B}
Given our proposed fully parallel detection algorithm, we have the following theorem for the system latency.
\begin{theorem}\label{latency_thrm}
The maximum execution time $T_{\rm exe}^{\rm max}$ for the detection of $K$ data symbols in the proposed code-domain NOMA with the general pattern matrix $\boldsymbol{G}_{M \times K}$, specified in \eqref{eq28}, using the detection algorithm in Section \ref{Sec3} is as follows: 
\begin{align}\label{eq33}
&T_{\rm exe}^{\rm max}\!=\!t_a\!\!\sum_{l_s=1}^{L_s}\!(m'_{l_s}\!\!\!-\!1)\!+\!T_{\rm MUD}^{\rm noisy}(\boldsymbol{F}^{({L_r})}_{m_{L_r}\times k_{L_r}}\!,\mathcal{C}(K_{L_r-1},\mathcal{C}_0))+\nonumber\\
&\!\!\!\sum_{l_r=2}^{L_r-1}\!\!T_{\rm MUD}^{\rm noiseless}(\boldsymbol{F}^{({l_r})}\!,\mathcal{C}(K_{l_r\!-\!1},\mathcal{C}_0))\!+\!T_{\rm MUD}^{\rm noiseless}(\boldsymbol{F}^{({1})}\!,\mathcal{C}_0),\!
\end{align}
where $t_a$ is the required time for an addition/subtraction,
 $\mathcal{C}(K_{l_r},\mathcal{C}_0)$ denotes the constellation space of the sum of (at most) $K_{l_r}$ original data symbols each from a modulation with the constellation space $\mathcal{C}_0$, $T_{\rm MUD}^{\rm noisy}(\boldsymbol{F}^{({l_r})}_{m_{l_r}\times k_{l_r}},\mathcal{C})$ is the required time for a given MUD algorithm to detect the $k_{l_r}$ symbols with the constellation space $\mathcal{C}$ from a set of $m_{l_r}$ equations defined according to the pattern matrix $\boldsymbol{F}^{({l_r})}_{m_{l_r}\times k_{l_r}}$ in the presence of additive noise, and $T_{\rm MUD}^{\rm noiseless}(\boldsymbol{F}^{({l_r})}_{m_{l_r}\times k_{l_r}},\mathcal{C})$ is defined similar to $T_{\rm MUD}^{\rm noisy}(\boldsymbol{F}^{({l_r})}_{m_{l_r}\times k_{l_r}},\mathcal{C})$ in the absence of noise.
\end{theorem}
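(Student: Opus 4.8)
The plan is to isolate the \emph{critical path} of the detection algorithm of Section~\ref{Sec3D} and sum the latencies of the operations along it. Recall that this algorithm runs Phase~I ($L_s$ recursions acting on the square factors) and then Phase~II ($L_r$ recursions acting on the rectangular factors), that the input of each recursion is the output of the preceding one, and that Phase~II consumes exactly the equations~\eqref{eq29} produced at the end of Phase~I. Hence the $L_s+L_r$ recursions are strictly serialized and $T_{\rm exe}^{\rm max}$ equals the sum of the per-recursion execution times. The one point requiring care is that \emph{within} a recursion there is full parallelism: the (super-)grouping rules of Sections~\ref{Sec3B2} and~\ref{Sec3C} split the current unknowns into disjoint blocks, so the sub-tasks spawned by a recursion act on disjoint data and run concurrently, whence the latency of a recursion is that of a single worst-case sub-task. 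I would establish this disjointness first, by an easy induction on the recursion index, using that for a square factor $\boldsymbol{P}^{(l_s)}$ (resp.\ a rectangular factor $\boldsymbol{F}^{(l_r)}$) the auxiliary quantities sharing a fixed row index modulo $m'_{l_s}$ (resp.\ $k_{l_r}$) depend only on a fixed disjoint set of data symbols (cf.\ \eqref{eq20}, \eqref{eq22}).

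Next I would account for Phase~I. In the recursion associated with $\boldsymbol{P}^{(l_s)}_{m'_{l_s}\times m'_{l_s}}$ the only arithmetic is forming combined observations, each a linear combination of $m'_{l_s}$ quantities with coefficients in $\{-1,0,+1\}$ (the rows of $\boldsymbol{\alpha}^{(l_s)}$, as output by Algorithm~1). Such a combination costs at most $m'_{l_s}-1$ additions/subtractions in sequence, i.e.\ time $t_a(m'_{l_s}-1)$; the $m'_{l_s}$ combined observations of a super-group, and all super-groups, are computed in parallel, and the (super-)regrouping itself is pure reindexing and costs no time. Summing over $l_s=1,\dots,L_s$ yields the term $t_a\sum_{l_s=1}^{L_s}(m'_{l_s}-1)$.

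Then I would account for Phase~II, which processes the $M_s$ super-groups of~\eqref{eq29} in parallel by the recursive routine of Section~\ref{Sec3C}. Its first recursion works on the equations~\eqref{eq21} built from $\boldsymbol{F}^{(L_r)}_{m_{L_r}\times k_{L_r}}$, whose right-hand side still carries the additive noise appearing in~\eqref{eq29}; hence its latency is $T_{\rm MUD}^{\rm noisy}(\boldsymbol{F}^{(L_r)}_{m_{L_r}\times k_{L_r}},\mathcal{C}(K_{L_r-1},\mathcal{C}_0))$, the unknowns being sums of at most $K_{L_r-1}$ original symbols by~\eqref{eq20}. Every later recursion, $l_r=2,\dots,L_r-1$, solves a noiseless system of the form~\eqref{eq25}, since its left-hand side consists of auxiliary symbols already recovered exactly at the previous recursion; its latency is therefore $T_{\rm MUD}^{\rm noiseless}(\boldsymbol{F}^{(l_r)},\mathcal{C}(K_{l_r-1},\mathcal{C}_0))$, again by the structure of the auxiliary symbols. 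The last recursion, $l_r=1$, solves~\eqref{eq27} and recovers the original data symbols, whose constellation is $\mathcal{C}_0$, in time $T_{\rm MUD}^{\rm noiseless}(\boldsymbol{F}^{(1)},\mathcal{C}_0)$. All sub-tasks of each recursion run concurrently and the regroupings are free, so Phase~II contributes the sum of these three (groups of) per-recursion MUD times. Adding the Phase~I and Phase~II contributions gives~\eqref{eq33}.

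The main obstacle is the bookkeeping in the last two paragraphs rather than any deep estimate: one must (i) argue rigorously that the block decomposition at each recursion is into \emph{disjoint} sets of unknowns, so that the per-recursion latency is well defined, equals a single sub-task's latency, and involves no hidden serialization; and (ii) track, along Phase~II, which recursion still sees channel noise (only the rightmost, $l_r=L_r$) and how the constellation of the unknowns contracts from $\mathcal{C}(K_{L_r-1},\mathcal{C}_0)$ in the first Phase~II recursion down to $\mathcal{C}_0$ in the last. Both facts follow directly from the constructions of Sections~\ref{Sec3B2}--\ref{Sec3D}, in particular from the grouping rules and from~\eqref{eq20}--\eqref{eq27}, so once (i) and (ii) are stated cleanly the theorem follows by simply adding up latencies along the serial chain of recursions.
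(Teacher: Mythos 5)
Your proposal is correct and follows essentially the same route as the paper's proof: serialize the $L_s+L_r$ recursions, observe that within each recursion the disjoint (super-)groups are processed in parallel so each square-factor recursion costs at most $t_a(m'_{l_s}-1)$ and each rectangular-factor recursion costs one worst-case MUD time (noisy only for the rightmost factor, with the constellations $\mathcal{C}(K_{l_r-1},\mathcal{C}_0)$ contracting to $\mathcal{C}_0$ at the last step), and sum. The only difference is that you propose to prove the disjointness of the blocks explicitly by induction, whereas the paper takes it as already established by the algorithm descriptions in Sections \ref{Sec3B2}--\ref{Sec3D}.
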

\begin{proof}
The detection algorithm processes $L_s$ square factor matrices through $L_s$ recursions. The $l'_s$-th recursion, for $l'_s=1,2,\dots,L_s$, involves combining sets of $m'_{l''_s}$ equations, where $l''_s\triangleq L_s-l'_s+1$, according to the rows of $\boldsymbol{\alpha}^{({l''_s})}_{m'_{l''_s}\times m'_{l''_s}}$ requiring at most $m'_{l''_s}-1$ additions/subtractions. Thus, assuming all such operations are done in parallel, the $l'_s$-th recursion requires at most $t_a(m'_{l''_s}-1)$ units of time. Now, after at most $t_a\sum_{l_s=1}^{L_s}(m'_{l_s}-1)$ units of time, we get sets of equations defined according to the Kronecker product of $L_r$ rectangular factor matrices, as \eqref{eq29}, requiring $L_r$ recursions. According to the analysis in Section \ref{Sec3C}, the first recursion involves processing sets of equations in the form of \eqref{eq21} each requiring at most $T_{\rm MUD}^{\rm noisy}(\boldsymbol{F}^{({L_r})}_{m_{L_r}\times k_{L_r}},\mathcal{C}(K_{L_r-1},\mathcal{C}_0))$ units of time. Furthermore, the middle recursions
require processing sets of equations in the form of \eqref{eq25} each incurring at most $T_{\rm MUD}^{\rm noiseless}(\boldsymbol{F}^{({l''_r})}_{m_{l''_r}\times k_{l''_r}},\mathcal{C}(K_{l''_r-1},\mathcal{C}_0))$ time{\footnote{Observe that one might be able to employ much simpler MUD algorithms in the absence of noise. Therefore, the current characterization which accounts for any difference on the detection latency (or complexity in \Tref{complexity_thrm}), given the presence or absence of noise, provides a more general formulation.}}, where $l''_r\triangleq L_r-l'_r+1$. Finally, the last recursion contains sets of equations defined according to \eqref{eq27} each necessitating $T_{\rm MUD}^{\rm noiseless}(\boldsymbol{F}^{({1})}_{m_{1}\times k_{1}},\mathcal{C}_0)$ time. Summing up the execution times of all $L_s+L_r$ recursions completes the proof.
\end{proof}   

\noindent 
\textbf{Remark 5.} \Tref{latency_thrm} characterizes the latency in the worst-case scenario. However, the actual execution time might be smaller. For example, the unknown variables of the middle recursions of rectangular factor matrices are obtained according to \eqref{eq24}. Therefore, if, for instance, half of the elements of $\boldsymbol{r}^{(L_r-l'_r)}_{i_{l''_r}+1}$ are zero, the constellation space is $\mathcal{C}(1/2K_{l''_r-1},\mathcal{C}_0)$.
	Hence, the $l'_r$-th recursion will require $T_{\rm MUD}^{\rm noiseless}(\boldsymbol{F}^{({l''_r})}_{m_{l''_r}\times k_{l''_r}},\mathcal{C}(1/2K_{l''_r-1},\mathcal{C}_0))$ time which is, in general, smaller than $T_{\rm MUD}^{\rm noiseless}(\boldsymbol{F}^{({l''_r})}_{m_{l''_r}\times k_{l''_r}},\mathcal{C}(K_{l''_r-1},\mathcal{C}_0))$. Similar arguments hold for all other recursions.

\noindent 
\textbf{Remark 6.} Note that \eqref{eq33} holds for $L_r>2$. It can be observed that with $L_r=1$ the last three terms of \eqref{eq33} should be replaced by $T_{\rm MUD}^{\rm noisy}(\boldsymbol{F}^{({1})}_{m_1\times k_1},\mathcal{C}_0)$. Moreover, for $L_r=2$, the third term in \eqref{eq33} becomes zero.

\noindent
\textbf{Remark 7.} As shown throughout this section, the proposed detection algorithm leads to a latency that grows logarithmically with the total number of users/REs. Thanks to the proposed parallel structure of the detection algorithm, a powerful BS can simultaneously process all the branches of the detection tree in uplink. Also, a given user with a typical processing power during the downlink phase can focus only on the detection of the branch containing its data symbol to avoid unnecessary detections.

\subsection{Detection Complexity}\label{Sec4C}

Recall that the proposed recursive approach results in a relatively small size for each factor matrix, even for a large $M$ and $K$; hence, it is highly desired to exactly characterize the maximum number of required operations. 
\begin{theorem}\label{complexity_thrm}
	The maximum number of additions/subtractions $N_{\rm add}^{\rm max}$ and multiplications $N_{\rm mul}^{\rm max}$ for the detection of $K$ data symbols in the proposed code-domain NOMA system with the general pattern matrix $\boldsymbol{G}_{M \times K}$, specified in \eqref{eq28}, using the recursive detection algorithm in Section \ref{Sec3} is as follows:
\begin{align}\label{eq34}
N_{\rm add}^{\rm max}=&M\sum_{l_s=1}^{L_s}(m'_{l_s}-1)+M_sN_{\rm add}^{\rm max}(\boldsymbol{F}_{M_{r}\times K_{r}}),\nonumber\\
N_{\rm mul}^{\rm max}=&M_sN_{\rm mul}^{\rm max}(\boldsymbol{F}_{M_{r}\times K_{r}}),
\end{align}
where $N_{\rm add}^{\rm max}(\boldsymbol{F}_{M_{r}\times K_{r}})$ ($N_{\rm mul}^{\rm max}(\boldsymbol{F}_{M_{r}\times K_{r}})$) is the maximum number of additions/subtractions (multiplications) for a NOMA system with the pattern matrix
$\boldsymbol{F}_{M_r\times K_r}\triangleq\boldsymbol{F}^{(1)}\otimes \dots \otimes \boldsymbol{F}^{({L_r})}$
 and the recursive detection algorithm in Section \ref{Sec3C}. $N_{\mathcal{X}}^{\rm max}(\boldsymbol{F}_{M_{r}\times K_{r}})$, $\mathcal{X}\in\{{\rm add},{\rm mul}\}$ is given by
	\begin{align}\label{eq35}
		&N_{\mathcal{X}}^{\rm max}(\boldsymbol{F}_{M_{r}\times K_{r}})\!=\!M_{L_r-1}N_{\mathcal{X}}^{\rm noisy}(\boldsymbol{F}^{({L_r})}_{m_{L_r}\times k_{L_r}}\!,\mathcal{C}(K_{L_r-1},\mathcal{C}_0))\nonumber\\
		&+\sum_{l'_r=2}^{L_r-1}\kappa^{\mathcal{S}}_{l'_r-1}M_{L_r-l'_r}N_{\mathcal{X}}^{\rm noiseless}(\boldsymbol{F}^{({L_r-l'_r+1})},\mathcal{C}(K_{L_r-l'_r},\mathcal{C}_0))\!\nonumber\\
		&+\kappa^{\mathcal{S}}_{L_r-1}N_{\mathcal{X}}^{\rm noiseless}(\boldsymbol{F}_{m_1\times k_1}^{({1})},\mathcal{C}_0),
	\end{align}
	where $M_{l_r}\triangleq\prod_{i=1}^{l_r}m_i$, $K_{l_r}\triangleq\prod_{i=1}^{l_r}k_i$, $\kappa^{\mathcal{S}}_{l'_r-1}\triangleq\prod_{i=L_r-l'_r+2}^{L_r}k_i$, $N_{\mathcal{X}}^{\rm noisy}(\boldsymbol{F}^{({l_r})}_{m_{l_r}\times k_{l_r}},\mathcal{C})$ is the required number of operations for a given MUD algorithm to detect the $k_{l_r}$ symbols with the constellation space $\mathcal{C}$ from a set of $m_{l_r}$ equations defined according to the pattern matrix $\boldsymbol{F}^{({l_r})}_{m_{l_r}\times k_{l_r}}$ in the presence of additive noise, and $N_{\mathcal{X}}^{\rm noiseless}(\boldsymbol{F}^{({l_r})}_{m_{l_r}\times k_{l_r}},\mathcal{C})$ is defined similarly in the absence of noise.
\end{theorem}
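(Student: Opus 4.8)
The plan is to count the arithmetic operations phase by phase, exactly mirroring the two‑phase detection algorithm of Section \ref{Sec3D}, and then add the contributions. First I would analyze Phase~I. In the $l'_s$-th recursion, with $l''_s\triangleq L_s-l'_s+1$, the receiver partitions the current collection of $M$ equations into $M/m'_{l''_s}$ groups of size $m'_{l''_s}$ and replaces each group by $m'_{l''_s}$ linear combinations whose coefficients are the entries of $\boldsymbol{\alpha}^{(l''_s)}$, all of which lie in $\{-1,0,+1\}$. Hence each combination costs at most $m'_{l''_s}-1$ additions/subtractions and \emph{no} multiplications, so the $l'_s$-th recursion costs at most $(M/m'_{l''_s})\cdot m'_{l''_s}\cdot(m'_{l''_s}-1)=M(m'_{l''_s}-1)$ additions; the total equation count stays $M$ throughout. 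Summing over $l'_s=1,\dots,L_s$ and reindexing gives the term $M\sum_{l_s=1}^{L_s}(m'_{l_s}-1)$, and Phase~I contributes nothing to the multiplication count.

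Next I would treat Phase~II. By the analysis in Section \ref{Sec3D}, at the end of Phase~I one is left with exactly $M_s$ disjoint super-groups, each carrying a system of the form \eqref{eq29}, i.e., governed by the pattern matrix $\boldsymbol{F}_{M_r\times K_r}$. These $M_s$ systems are processed independently by the rectangular-matrix recursion of Section \ref{Sec3C}, and forming the super-groups is only a re-indexing, so the total Phase~II cost is precisely $M_s$ times the cost $N_{\mathcal{X}}^{\rm max}(\boldsymbol{F}_{M_r\times K_r})$ of that recursion applied once. This yields the second summand of $N_{\rm add}^{\rm max}$ and the whole of $N_{\rm mul}^{\rm max}$.

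It then remains to establish \eqref{eq35} by unrolling the recursion of Section \ref{Sec3C}. The first recursion solves $M_{L_r-1}$ systems of the form \eqref{eq21}, each a noisy $m_{L_r}$-equation, $k_{L_r}$-unknown system over $\boldsymbol{F}^{(L_r)}$ with unknowns in $\mathcal{C}(K_{L_r-1},\mathcal{C}_0)$, giving the first term of \eqref{eq35}. For a middle recursion $l'_r$ ($1<l'_r<L_r$), the algorithm enters with $\kappa^{\mathcal{S}}_{l'_r-1}$ super-groups, each decomposed into $M_{L_r-l'_r}$ \emph{noiseless} systems of the form \eqref{eq25} over $\boldsymbol{F}^{(L_r-l'_r+1)}$ with constellation $\mathcal{C}(K_{L_r-l'_r},\mathcal{C}_0)$ — noiseless because the relevant auxiliary symbols were recovered exactly in the preceding recursion; summing over $l'_r=2,\dots,L_r-1$ produces the middle term. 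Finally the $L_r$-th recursion solves $\kappa^{\mathcal{S}}_{L_r-1}$ noiseless systems of the form \eqref{eq27} over $\boldsymbol{F}^{(1)}$ with the original constellation $\mathcal{C}_0$, giving the last term; since nothing else costs arithmetic, these are all the operations. Collecting the two phases gives \eqref{eq34} and \eqref{eq35}.

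The main obstacle is not any single computation but the bookkeeping: one must verify by an induction on the recursion index — the induction that underlies the tree picture implicit in Section \ref{Sec3C} — that the number of sub-systems at level $l'_r$ is exactly $\kappa^{\mathcal{S}}_{l'_r-1}M_{L_r-l'_r}$ and that their unknowns lie in $\mathcal{C}(K_{L_r-l'_r},\mathcal{C}_0)$, and likewise that Phase~I preserves the equation count $M$. Everything else is summation. The "$\rm max$"/worst-case qualifiers absorb the possibility that a combining row of $\boldsymbol{\alpha}^{(l''_s)}$ or a row $\boldsymbol{r}^{(\cdot)}_{\cdot}$ has fewer nonzero entries than its length (cf.\ Remark~5), and the degenerate cases $L_r\in\{1,2\}$ are handled as in Remark~6.
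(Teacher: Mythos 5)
Your proposal is correct and follows essentially the same route as the paper: the paper's own proof simply traces the two-phase detection algorithm and counts the worst-case operations at each recursion (mirroring the proof of \Tref{latency_thrm}), which is exactly the phase-by-phase bookkeeping you carry out, including the $M(m'_{l''_s}-1)$ additions per square-matrix recursion, the $M_s$ independent super-groups of the form \eqref{eq29}, and the per-recursion sub-system counts $M_{L_r-1}$, $\kappa^{\mathcal{S}}_{l'_r-1}M_{L_r-l'_r}$, and $\kappa^{\mathcal{S}}_{L_r-1}$ with the stated constellations. Your handling of the worst-case qualifiers and of the degenerate cases $L_r\in\{1,2\}$ also matches the paper's Remarks 5 and 8.
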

\begin{proof}
The proof is based on the details of the detection algorithm presented in Section \ref{Sec3} and following similar steps to the proof of \Tref{latency_thrm} while counting the maximum number of operations at each recursion.
\end{proof}

\noindent\textbf{Remark 8.} Note that \eqref{eq35} is obtained assuming $L_r>2$. For $L_r=1$, \eqref{eq35} should be replaced by $N_{\mathcal{X}}^{\rm noisy}(\boldsymbol{F}^{({1})}_{m_1\times k_1},\mathcal{C}_0)$. Moreover, the second term in \eqref{eq35} is zero for $L_r=2$.

{Note that \eqref{eq35} provides the full flexibility to characterize the overall complexity of the system with respect to any particular MUD algorithm, such as MAP, ML, AMP, belief propagation (BP), etc. (see, e.g., \cite{chen2017pattern,song2020super,chi2018practical,liu2019capacity}), applied to the sets of equations defined according to each of $L_r$ rectangular factor matrices at each recursion (see Section \ref{Sec3C}). In order to gain further insight into the complexity of the proposed detection algorithm, in the following, we consider the case where the overall pattern matrix is $\boldsymbol{G}_{M\times K}=\boldsymbol{F}_{M_r\times K_r}\otimes\boldsymbol{P}^{(1)}_{m'_1\times m'_1}\!\otimes\boldsymbol{P}^{(2)}_{m'_2\times m'_2}\otimes \dots \otimes \boldsymbol{P}^{({L_s})}_{m'_{L_s}\times m'_{L_s}}$, i.e., the Kronecker product of $L_s$ square factor matrices with a large $M_r\times K_r$ rectangular factor matrix $\boldsymbol{F}_{M_r\times K_r}$. Therefore, the overall complexity of the system is given as \eqref{eq34}. In Section \ref{Sec6}, we provide further numerical results to demonstrate that recursion over rectangular factor matrices also helps in reducing the detection complexity, as formulated in \eqref{eq35}. 

\begin{table*}[t]
	\centering
	{\caption{Computational complexity for the detection of all $K$ data symbols.}
		\label{T1}
		\begin{tabular}{M{4.2cm}||M{6cm}M{3cm}}  
			Algorithm & Maximum number of additions/subtractions & Maximum number of multiplications\\ \hline\hline  {\vspace{0.1cm}}
			SIC over $\boldsymbol{G}_{\!M\times K}$ & $\mathcal{O}\left(K^2M^3\right)$ & $\mathcal{O}\left(K^2M^3\right)$ \\\hline {\vspace{0.1cm}}
			Ours with SIC over $\boldsymbol{F}_{M_r\times K_r}$ & {\vspace{0.05cm}}  $\mathcal{O}\left(M+{K^2M^3}/{M_s^4}\right)$ & $\mathcal{O}\left({K^2M^3}/{M_s^4}\right)$\\\hline\hline {\vspace{0.1cm}}
			BP over $\boldsymbol{G}_{\!M\times K}$ {\vspace{0.05cm}} & {\vspace{0.05cm}}  $\mathcal{O}\left(T_{\rm in}d_gMK|\mathcal{C}_0|^{d_g}\log_2\!|\mathcal{C}_0|\right)$ & $\mathcal{O}\left(d_gMK|\mathcal{C}_0|^{d_g}\right)$\\\hline {\vspace{0.1cm}}
			Ours with BP over $\boldsymbol{F}_{M_r\times K_r}$ {\vspace{0.1cm}} & {\vspace{0.05cm}} $\mathcal{O}\left(M+{T_{\rm in}d_fMK|\mathcal{C}_0|^{d_f}\log_2\!|\mathcal{C}_0|}/{M_s}\right)$ & $\mathcal{O}\left(d_fMK|\mathcal{C}_0|^{d_f}/M_s\right)$\\\hline\hline
			{\vspace{0.1cm}}
			BP-IDD over $\boldsymbol{G}_{\!M\times K}$ {\vspace{0.05cm}} & {\vspace{0.05cm}}  $\mathcal{O}\left(T_{\rm in}T_{\rm out}d_gMK|\mathcal{C}_0|^{d_g}\log_2\!|\mathcal{C}_0|\right)$ & $\mathcal{O}\left(d_gMK|\mathcal{C}_0|^{d_g}\right)$\\\hline {\vspace{0.1cm}}
			Ours with BP-IDD over $\boldsymbol{F}_{M_r\times K_r}$ {\vspace{0.1cm}} & {\vspace{0.05cm}} $\mathcal{O}\left(M+{T_{\rm in}T_{\rm out}d_fMK|\mathcal{C}_0|^{d_f}\log_2\!|\mathcal{C}_0|}/{M_s}\right)$ & $\mathcal{O}\left(d_fMK|\mathcal{C}_0|^{d_f}/M_s\right)$\\\hline
	\end{tabular}}
\end{table*}
Assuming $\boldsymbol{G}_{\!M\times K}=\!\boldsymbol{F}_{\!M_r\times K_r}\otimes\boldsymbol{P}^{(1)}_{\!m'_1\times m'_1}\otimes\dots\otimes \boldsymbol{P}^{({L_s})}_{\!m'_{L_s}\times m'_{L_s}}$, we have $M_s\triangleq \prod_{l_s=1}^{L_s}m'_{l_s}$, $M=M_sM_r$, and $K=M_sK_r$. Note that in \eqref{eq34}, $M\sum_{l_s=1}^{L_s}(m'_{l_s}-1)=\mathcal{O}(M)$ since $\sum_{l_s=1}^{L_s}(m'_{l_s}-1)$ is a constant much smaller than $M_s\triangleq \prod_{l_s=1}^{L_s}m'_{l_s}$ and hence than $M$ and $K$. Table \ref{T1} then compares the complexity of our scheme, for the case of the pattern matrix considered here, with some other well-known detection algorithms, i.e., SIC, BP, and BP-based iterative detection and decoding (BP-IDD) \cite{chen2017pattern,liu2008low,ren2016advanced}. Note that the computational complexities in Table \ref{T1} are for the detection of all $K$ data symbols (per symbol of all users) using our proposed scheme or by directly applying the aforementioned algorithms to the overall pattern matrix $\boldsymbol{G}_{\!M\times K}$. In Table \ref{T1}, $|\mathcal{C}_0|$ denotes the size of the modulation constellation for the original data symbols. Note that our detection algorithm for the Kronecker product of square factor matrices results in sets of equations defined according to $\boldsymbol{F}_{\!M_r\times K_r}$ over subsets of size $K_r$ of original data symbols. Therefore, the constellation space of detections performed over both $\boldsymbol{F}_{\!M_r\times K_r}$ and $\boldsymbol{G}_{\!M\times K}$ is $\mathcal{C}_0$. Moreover, $T_{\rm in}$ and $T_{\rm out}$ in Table \ref{T1} represent the BP-IDD inner and outer iteration numbers, respectively. Additionally, $d_f$ and $d_g$ denote the maximum row weights of the pattern matrices $\boldsymbol{F}_{M_r\times K_r}$ and $\boldsymbol{G}_{\!M\times K}$, respectively. Note that $d_f\ll d_g$ since $K_r=K/M_s\ll K$. As seen, applying our recursive detection algorithm in conjunction with other MUD algorithms results in significantly lower complexities compared to directly applying an MUD algorithm. It is worth mentioning that the results in Table \ref{T1} are without applying recursion over the overall rectangular pattern matrix $\boldsymbol{F}_{\!M_r\times K_r}$. As numerically shown in Section \ref{Sec6}, applying our proposed recursive detection algorithm for the Kronecker product of rectangular factor matrices further lowers the complexity. Therefore, our proposed recursive detection algorithm not only significantly reduces the latency, thanks to fully parallel detection, but also noticeably lowers the overall detection complexity.
}

\section{Extensions to Practical Scenarios}\label{Sec5}
In this section, we discuss how the results of the paper can be applied to several other practical scenarios.
\subsection{Downlink and Uplink Fading Channels}\label{Sec5A}
Throughout the paper, we focused on the case of GMAC where unit/equal gains are assumed for the channels between the users and the BS in different REs. In practice, the channel gain $h_{mk}$ between the $k$-th user, for $k=1,2,\dots,K$, and the BS at the $m$-th RE, for $m=1,2,\dots,M$, can be different. In this subsection we discuss how the results of the paper can be readily extended to downlink and uplink fading channels.

\subsubsection{Downlink Transmission}\label{Sec5A1} According to \eqref{eq4}, the received signal vector at the $k$-th user is given by $\boldsymbol{y}_k={\rm diag}(h_{1k},h_{2k},\dots,h_{Mk})\boldsymbol{G}_{M\times K}\boldsymbol{x}+\boldsymbol{n}_k$. By dividing the $m$-th equation in $\boldsymbol{y}_k$ by $h_{mk}$ we get a new set of equations as $\tilde{\boldsymbol{y}}_k=\boldsymbol{G}_{M\times K}\boldsymbol{x}+\tilde{\boldsymbol{n}}_k$, where $\tilde{\boldsymbol{n}}_k(m)$ is Gaussian with mean zero and variance $\tilde{\sigma}^2_m\triangleq \sigma^2/h_{mk}^2$. Therefore, all the results can be extended to the case of downlink fading channels with the slight change that the noise components over different REs do not have identical variances though they are still independent.

\subsubsection{Uplink Transmission}\label{Sec5A2} Here, we assume in this paper that for a given user the channel gains over all $M$ REs are equal, i.e., $h_{mk}=h_k$, $\forall m,k$. This is a reasonable (and common) assumption, e.g., if the $M$ REs are the adjacent sub-channels of a slowly changing channel, they have almost equal gains. However, if the channel gains are different, the users can adjust their power at each RE in such a way that equal gain is observed on the received signals over different REs. Note that this assumption for the case of downlink transmission yields exactly the same channel model $\boldsymbol{y}_k/h_k=\boldsymbol{G}_{M\times K}\boldsymbol{x}+\boldsymbol{n}_k$ as GMAC  with identical noise variances over different REs.

Recall, based on \eqref{eq3}, that the uplink signal received by the BS can be expressed as $\boldsymbol{y}=\boldsymbol{H}\boldsymbol{x}+\boldsymbol{n}$, where $\boldsymbol{H}\triangleq\boldsymbol{\mathcal{H}}{\odot}\boldsymbol{G}_{M\times K}$ and $\boldsymbol{\mathcal{H}}\triangleq[\boldsymbol{h}_1,\boldsymbol{h}_2,\dots,\boldsymbol{h}_K]$. With the above assumption, we have $\boldsymbol{h}_k=h_k\boldsymbol{1}_{M\times 1}$, $\forall k$. The received signal vector can then be rewritten as $\boldsymbol{y}=\boldsymbol{G}_{M\times K}\tilde{\boldsymbol{x}}+\boldsymbol{n}$, where $\tilde{x}_k\triangleq \tilde{\boldsymbol{x}}(k)=h_kx_k$. Therefore, all of our earlier results can be readily extended to the case of uplink fading channels with the slight change that each $k$-th data symbol is scaled by the channel gain $h_k$.

{\noindent\textbf{Remark 9.} As clarified in this section, one can readily extend the results of the paper from the case of GMAC model to fading channels. One of the important extensions is the extension of Algorithm 1 to the case of fading channels. The discussions in this section demonstrate that, in the case of downlink fading channels, given a square factor matrix $\boldsymbol{P}$, the corresponding optimal combining matrix $\boldsymbol{\alpha}$ and SNR gains are dependent to the statistical properties of the channel and should update every time the downlink channel gains change. This is because for the case of downlink fading channels the noise components have different variances (inversely proportional with the square of the channel gains); hence, we need to revise the definition of the SNR gains with respect to the noise variances which intuitively suggests assigning higher weights to the stronger channels to maximize the SNR gains. 
	  Therefore, it is more desirable to apply the proposed algorithm in this paper to slow fading downlink channels. Note that this is also the case with other code-domain NOMA schemes where the overall pattern matrix should be updated after each change of the channel matrix. However, as discussed in Section \ref{Sec3A}, the complexity of such updates is much smaller here due to the factorization of the overall pattern matrix which significantly reduces the search space for our proposed algorithm compared to conventional PDMA algorithms. Additionally, in the case of downlink fading channels, the BS potentially has the computing power to run Algorithm 1 every time the downlink channel matrix changes.
  On the other hand, in the case of uplink fading channels, we have $\boldsymbol{y}=\boldsymbol{G}_{M\times K}\tilde{\boldsymbol{x}}+\boldsymbol{n}$. Although this changes the SNRs of different users according to their channel gains, it does not require Algorithm 1 to run again. Therefore, Algorithm 1, in its current form, can be applied to find the corresponding combining matrices and the SNR gains, and one can use the pre-stored results from Algorithm 1 for different factor matrices.
}

\subsection{Joint Power- and Code-Domain NOMA (JPC-NOMA)}\label{Sec5B}

The code-domain NOMA scheme proposed in this paper can be combined in many ways with power-domain NOMA. For instance, if, for some $l$, $2^{m_l}-1<k_l$, then we cannot satisfy distinct nonzero columns for the factor matrix $\boldsymbol{G}^{(l)}_{m_l\times k_l}$. Consequently, the overall pattern matrix $\boldsymbol{G}_{M\times K}$ will have some repeated columns, i.e., same pattern vectors for some of the users (see, e.g., \cite[Example 1]{jamali2018low}).
Therefore, one cannot distinguish between the data of the users with an equal pattern vector. In other words, the data symbols of the users with an equal pattern vector will be paired together, i.e., they coexist in all sets of equations containing any of those symbols.
In such a case, we propose to assign different power coefficients to the users having the same pattern vector in order to differentiate between them using power-domain NOMA techniques. Then, the receiver after detecting the paired symbols using our proposed detection algorithm, detects the individual data symbols contained in each paired symbol according to the principles of power-domain NOMA.
In this case, users with the most disparate channel qualities should be paired together.

More importantly, we can incorporate power-domain NOMA principles to increase the reliability of the detection performed over sets of equations at each layer of the detection algorithm proposed in Section \ref{Sec3C}.
 Recall that each of those sets of equations contains a disjoint subset of symbols.
  For example, considering the general case of pattern matrix, specified in \eqref{eq28}, the final recursion involves $M_s\prod_{l_1=2}^{L_r}k_{l_r}$ sets of equations defined similar to \eqref{eq27}, each containing $k_1$ disjoint symbols while the indices of the symbols are known. As shown in \cite{chen2017pattern}, assigning different power scaling and phase shifting factors to the data symbols in a PDMA system improves the detection reliability. Therefore, the transmitter can group each subset of $k_1$ users
  together. Then different power levels (as well as, possibly, phase shifts) are assigned to the data symbols of the users in each group according to the principles of power-domain NOMA in order to increase the reliability of the detection over the aforementioned $M_s\prod_{l_1=2}^{L_r}k_{l_r}$ sets of equations.

\section{Numerical Results}\label{Sec6}
In this section, we numerically compare the average sum-rate of various configurations, including the traditional OMA, regular PDMA with the reported optimal pattern matrices in \cite{chen2017pattern} such as $3\times 6$ and $4\times 8$ matrices with the maximum row weight $d_f=4$, and our proposed recursive code-domain NOMA with the square factor matrices obtained from Algorithm 1.
{ For our numerical analysis in this section, we consider the GMAC model that corresponds to the case where all channel gains are equal to one. Additionally, we assume that each user transmits with power $P_x$, and the noise variance of the receiver at each RE is $\sigma^2$.}
By running Algorithm 1 for $m_l=3$ and $4$, we get the optimal square factor matrices $\boldsymbol{P}^{(1)}_{3\times 3}$ and $\boldsymbol{P}^{(2)}_{4\times 4}$ in \eqref{eq7} with the combining matrices as \eqref{eq8_alpha12} and individual SNR gains reported in Example 2. 
\begin{figure}[t]
	\centering
	\includegraphics[trim=0.7cm 0.2cm 0 0,width=3.7in]{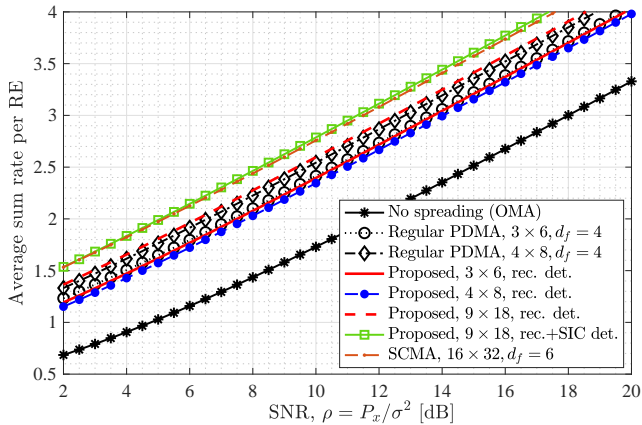}
	\caption{Average sum-rate of various multiple access mechanisms including the traditional OMA, regular PDMA, and our proposed recursive method.}
	\label{fig3}
	\vspace{-0.1in}
\end{figure}

Fig. \ref{fig3} shows the per-RE average sum-rate of various multiple access techniques. Here, for our proposed scheme, we considered $\boldsymbol{G}_{M\times K}=\boldsymbol{F}\otimes \boldsymbol{P}^{\otimes r}$ with $\boldsymbol{F}=[1~1]$, $\boldsymbol{P}$ being either $\boldsymbol{P}^{(1)}_{3\times 3}$ or $\boldsymbol{P}^{(2)}_{4\times 4}$, and $r=1$ or $2$.
{It is observed that both the regular PDMA and our proposed scheme significantly outperform the traditional OMA method. Moreover, for relatively similar system parameters, our proposed method with the recursive detection performs very close to the capacity of regular PDMA with the optimal pattern matrix $\boldsymbol{A}_{M\times K}$ and optimal MAP detection, i.e., $C^{\rm PDMA}_M=\frac{1}{2M}\log_2\det\left({\boldsymbol{I}}_{M}+\rho\boldsymbol{A}\boldsymbol{A}^T\right)$.
Note that the negligible loss in the sum-rate performance of our scheme compared to that of PDMA, for a given dimension $M$ and $K$ of the pattern matrix, is because of two major reasons. First, for PDMA we assumed optimal MAP detection over the overall pattern matrix and plotted the capacity formula given the pattern matrix. However, for our scheme we are applying our low-complexity detection scheme which has a much lower complexity than MAP. Second, the design of the pattern matrix for PDMA is more complex. Indeed, it searches for the best $M\times K$ pattern matrix to maximize the sum rate. Our scheme, on the other hand, designs the pattern matrix as the Kronecker product of smaller factor matrices such that each factor matrix has some properties to render a low-complexity detection. On the positive side, due to significantly lower complexity of our scheme, we can utilize pattern matrices with higher dimensions (e.g., by increasing $r$) and, also, apply SIC detection at some recursions similar to Example 4 (see also \cite[Example 4]{jamali2018low}) to get larger sum rates. This can significantly boost the performance of our scheme for a fixed overload factor.}

{Fig. \ref{fig3} also shows the capacity formula curve (i.e., the best performance one can achieve) for the SCMA method having a $16\times32$ pattern matrix $\boldsymbol{H}_{16\times32}$ with a row weight of 6 and a column weight of 3. This sparse pattern matrix is formed by picking the parity-check matrix of a regular low-density parity-check (LDPC) code according to \cite{mansour2006640}. It is observed that the performance of the SCMA method with this pattern matrix is almost the same as the green curve, i.e., our method with a $9\times18$ pattern matrix decoded recursively (with SIC applied at some recursions), according to our proposed detection algorithm. This further highlights the potentials of our proposed scheme given that it is capable of achieving almost the same performance as SCMA with a twice smaller dimension and a low-complexity decoder. Additionally, the performance of our proposed scheme  is significantly improved by increasing the dimensions of the overall pattern matrix (as shown in Fig. \ref{fig4}) while the SCMA performance does not change much by increasing the pattern matrix dimension (given the sparsity constraint on the pattern matrix which does not allow transmitting the symbols of the users over many REs).}

\begin{figure}[t]
	\centering
	\includegraphics[trim=0.7cm 0.2cm 0 0,width=3.7in]{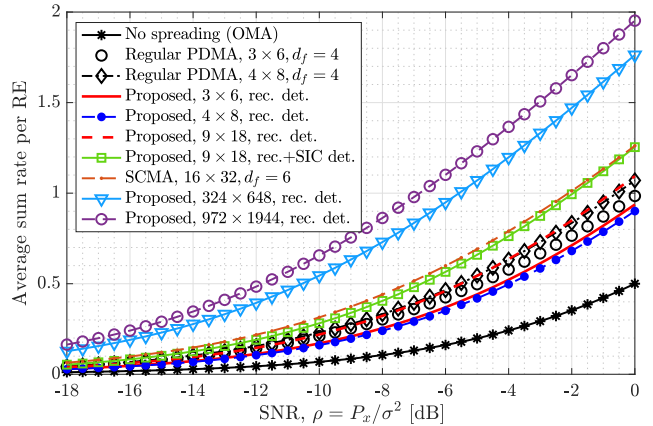}
	\caption{Average sum-rate of various multiple access schemes at low SNRs.}
	\label{fig4}
	\vspace{-0.1in}
\end{figure}
{In Fig. \ref{fig4}, the focus is specifically on low-SNR regimes. Here, we consider $\boldsymbol{G}_{M\times K}=\boldsymbol{F}_{4\times8}\otimes {\boldsymbol{P}^{(1)}_{3\times 3}}^{\!\!\!\otimes r}$ for the proposed scheme with $\boldsymbol{F}_{4\times8}$ being the $4\times 8$ PDMA matrix with $d_f=4$ from \cite{chen2017pattern}. In this case, $M=4\times3^r$, $K=8\times3^r$, and the overload factor $\beta=2$. It can be observed in Fig. \ref{fig4} that having large pattern matrices are essential at low SNRs to avoid getting close-to-zero sum rates. While it is not practically feasible to implement conventional code-domain NOMA schemes at such large dimensions considered in this setting, our low-complexity recursive approach makes it possible to spread the data symbols of the users using desirably large pattern vectors. As a consequence, $2.74$  and $3.46$ times larger average sum rates are achieved for our recursive scheme with $\boldsymbol{G}_{324\times 648}$ (i.e., $r=4$) and $\boldsymbol{G}_{972\times 1944}$ (i.e., $r=5$), respectively, compared to $4\times 8$ PDMA scheme at the SNR of $-15$ \si{dB}. These gains are as large as $10.25\times$ and $12.91\times$, respectively, when compared to OMA. One may deduce that such gains are directly as a result of the development of the SNR gains through the proposed recursive detection algorithm in Section \ref{Sec3B2}. Indeed, using \Cref{sum_rate_cor}, the average sum-rate per RE for the pattern matrix design of $\boldsymbol{G}_{M\times K}=\boldsymbol{F}_{4\times8}\otimes {\boldsymbol{P}^{(1)}_{3\times 3}}^{\!\!\!\otimes r}$ is $C_M=\frac{1}{8}\log_2\det\left({\boldsymbol{I}}_{4}+\rho(4/3)^{r}\boldsymbol{F}_{4\times8}\boldsymbol{F}^T_{4\times8}\right)$. Therefore, starting from $r=0$, each unit increase of $r$ improves the average sum-rate by $10\log_{10}(4/3)\approx1.25$ \si{dB}.}

\begin{figure}[t]
	\centering
	\includegraphics[trim=0.7cm 0.2cm 0 0,width=3.4in]{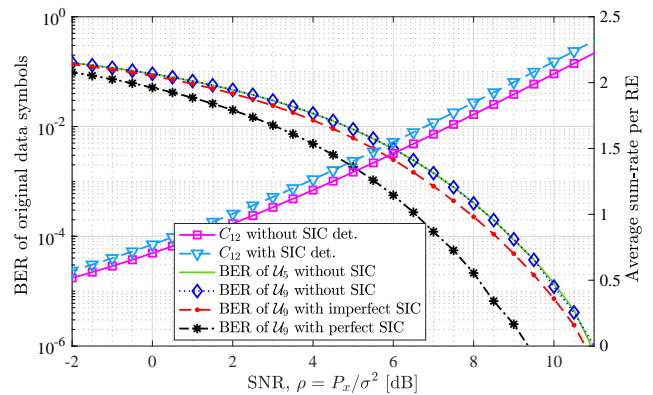}
	\caption{Average sum-rate and BER comparison for the detection with and without SIC.}
	\label{fig5}
	\vspace{-0.1in}
\end{figure}
{ In order to analyze the bit error rate (BER) and also observe the impact of SIC detection on the BER, we consider the setup in Example 1. The average sum-rate $C_{12}$ of the system is characterized in Example 4. Additionally, we have considered a sample way of incorporating SIC detection in the middle of the detection process, in Example 4. The increase on the average sum-rate by incorporating SIC detection is evident from Example 4 and is also confirmed in Fig. \ref{fig5}. For the BER performance we assume that all users employ binary phase shift keying (BPSK) modulation and, similar to Example 1, we only focus on the detection of the users $\mathcal{U}_1$, $\mathcal{U}_5$, and $\mathcal{U}_9$.
Using Monte-Carlo simulations with $10^7$ trials, it is confirmed that all three users have the same BER performance (without SIC detection) since the overall SNR gain for all of them is $\gamma_{t,i}=(4/3)^2$, $i=1,5,9$ (see Example 2). Additionally, as shown in Example 4, incorporating the SIC detection, according to Example 4, increases the overall SNR gain of $\mathcal{U}_9$ from $(4/3)^2$ to $8/3$. This is equivalent to $10\log_{10}(3/2)\approx1.76$ \si{dB} gain on the BER performance of $\mathcal{U}_9$, assuming perfect SIC detection, which is also verified in Fig. \ref{fig5}. 
{However, due to the error propagation on the SIC detection, the gain on the BER of $\mathcal{U}_9$ with SIC detection is much smaller than $1.76$ \si{dB}, demonstrating the impact of imperfect SIC detection. In particular, there is around $1.3-1.5$ \si{dB} gap between perfect and imperfect SIC detection over almost all ranges of SNR.}
Nevertheless, interestingly, the performance of SIC detection (according to the sample way of incorporation as explained in Example 4) improves both average sum-rate and the BER performance. This is mainly because we only employed SIC detection over the last step of the detection, and once the detection of the earlier symbols is reliable, the gain of the SNR is dominant to the imperfection of the SIC detection. In general, the the BER performance may degrade if we incorporate SIC detection in many middle steps, calling for an interesting trade-off between the rate and reliability (see also Remark 4).}

{In Section \ref{Sec4C}, by considering the overall pattern matrix as $\boldsymbol{G}_{\!M\times K}=\!\boldsymbol{F}_{\!M_r\times K_r}\otimes\boldsymbol{P}^{(1)}_{\!m'_1\times m'_1}\otimes\dots\otimes \boldsymbol{P}^{({L_s})}_{\!m'_{L_s}\times m'_{L_s}}$, we have shown that applying our recursive detection algorithm in conjunction with other MUD algorithms significantly lowers the detection complexity compared to directly applying an MUD algorithm. Here, we provide a numerical example to show that applying our proposed recursive detection algorithm for the Kronecker product of rectangular factor matrices further lowers the complexity. To see this, let us consider the overall pattern matrix as $\boldsymbol{F}_{2\times 6}=\boldsymbol{F}^{(1)}_{1\times 2}\otimes\boldsymbol{F}^{(2)}_{2\times 3}$,  where $\boldsymbol{F}^{(1)}_{1\times 2}$ and $\boldsymbol{F}^{(2)}_{2\times 3}$ are given in Example 3. Let us further assume that we want to apply MAP for the detection over the system of equations defined as $\boldsymbol{y}_{2\times 1}=\boldsymbol{F}_{2\times 6}\boldsymbol{x}_{6\times 1}+\boldsymbol{n}_{2\times 1}$, and we consider the number of possibilities that a MAP detector has to search for as a rough approximate of the detection complexity. In this case, directly applying MAP to this system of equations requires searching over all $|\mathcal{C}_0|^6$ possibilities. On the other hand, applying our recursive detection algorithm requires $|\mathcal{C}(2,\mathcal{C}_0)|^3+3|\mathcal{C}_0|^2$ searches, where $\mathcal{C}(2,\mathcal{C}_0)$ is the constellation space of the sum of $2$ original data symbols each from a modulation with the constellation space $\mathcal{C}_0$ (see \eqref{eq35} for the details). Assuming quadrature phase shift keying (QPSK) modulation as an example, $|\mathcal{C}_0|=4$ and $|\mathcal{C}(2,\mathcal{C}_0)|=9$. Therefore, our scheme requires computations of $777$ cases which is much smaller than $4^6=4096$, i.e., direct application of MAP to $\boldsymbol{F}_{2\times 6}\boldsymbol{x}_{6\times 1}$.}

\section{Conclusions and Future Directions}\label{Sec7}
{We proposed a low-complexity and scalable recursive approach toward code-domain NOMA by constructing the overall pattern matrix as the Kronecker product of several factor matrices. We then developed detection algorithms that reduce to several layers/recursions,  each dealing with disjoint subsets of equations corresponding to certain factor matrices, that can be executed in parallel.
For the Kronecker product of square factor matrices we proposed a systematic way of choosing the factor matrices that enables a remarkably low-complexity detection involving only few linear operations (additions/subtractions). 
Furthermore, for the Kronecker product of rectangular factor matrices we provided a recursive detection algorithm that can work on the general case of factor matrices. Given the proposed schemes and the detection algorithm we characterized the system performance in terms of average sum rate, latency, and detection complexity. We further discussed possible extensions of the work to the case of fading channels and joint power- and code-domain NOMA.
We showed that the proposed scheme has significantly lower complexity and latency compared to straightforward code-domain NOMA schemes. Moreover, it is numerically verified that by utilizing large pattern matrices the proposed scheme significantly improves the average sum rate.
}

{The proposed approach in this paper can be extended in various directions. Here, we highlight several directions for the future research.
\subsubsection{Optimal design of factor matrices} \Tref{sum_rate_thm} and \Cref{sum_rate_cor} naturally lead to a design strategy for optimal factor matrices that maximize the average sum rate (see Remark 3). The characterization of other performance metrics, such as individual rates \cite{xu2015new}, and then finding the optimal factor matrices with respect to these metrics is a direction for future research.
\subsubsection{Rate-Reliability-Latency Trade-offs} Given the set of factor matrices and the overall recursive approach discussed in Section \ref{Sec3}, various detection schemes can be applied at each layer/recursion trading off between the rate, reliability, and latency, as exemplified in Remark 4 and Example 4. Detailed characterizations of such trade-offs is another direction for future research.
\subsubsection{Joint Power- and Code-Domain NOMA (JPC-NOMA)} As discussed in Section \ref{Sec5B}, power-domain NOMA can be combined with the proposed code-domain NOMA in several ways to improve the system design in terms of various performance metrics including reliability, throughput, fairness, etc. Designing such joint networks and analytically characterizing their performance metrics to show the advantages of the joint design is another important future research direction.
\subsubsection{Grant-Free Transmission} In many use cases of massive communication, it is essential to reduce the coordination overhead of the communication protocol. Grant-free protocols attempt to serve a massive number of users with minimal (or even without any) coordination. Incorporating the proposed low-complexity code-domain NOMA scheme in the context of grant-free transmission is a vital future direction.}

\appendices
\section{Proof of \Lref{lem1}}\label{AppA}
Let $\boldsymbol{P}=\left[p_{i,j}\right]_{m\times m}$. 
If there exists no such a combining matrix $\boldsymbol{\alpha}=\left[\alpha_{i,j}\right]_{m\times m}$ with $\alpha_{i,j}\in\{-1,0,+1\}$, then we are done. Otherwise, if such a matrix exists, then we prove the uniqueness of the $i$-th row of $\boldsymbol{\alpha}$, for $i=1,2,\dots,m$, by contradiction. 

Assume to the contrary that there are $T=2$ distinct sets of coefficients  $\{\alpha^{(1)}_{i,j}\}_{j=1}^m$ and $\{\alpha^{(2)}_{i,j}\}_{j=1}^m$ for the $i$-th row of $\boldsymbol{\alpha}$ that result in singleton vectors with nonzero elements at the $i$-th position of the $i$-th row of $\boldsymbol{\alpha}\boldsymbol{P}$. In other words, we have
	\begin{align}
		\sum_{j=1}^{m}\alpha^{(t)}_{i,j}p_{j,i}&=w_t\neq 0,\\
		\sum_{j=1}^{m}\alpha^{(t)}_{i,j}p_{j,i'}&=0, ~~~~ i'\neq i=1,2,\dots,m,
	\end{align}
for $t=1,2$, where $w_1$ and $w_2$ are two nonzero integers. Then we can define a new set of combining coefficients $\{\alpha^{(3)}_{i,j}\}_{j=1}^m$ with $\alpha^{(3)}_{i,j}\triangleq w_1\alpha^{(2)}_{i,j}-w_2\alpha^{(1)}_{i,j}$. Note that
\begin{align}
	\sum_{j=1}^{m}\alpha^{(3)}_{i,j}p_{j,i}&=w_1\sum_{j=1}^{m}\alpha^{(2)}_{i,j}p_{j,i}-w_2\sum_{j=1}^{m}\alpha^{(1)}_{i,j}p_{j,i}\nonumber\\
	&=w_1w_2-w_2w_1=0,\label{apA_3}\\
	\sum_{j=1}^{m}\alpha^{(3)}_{i,j}p_{j,i'}&=w_1\sum_{j=1}^{m}\alpha^{(2)}_{i,j}p_{j,i'}-w_2\sum_{j=1}^{m}\alpha^{(1)}_{i,j}p_{j,i'}\nonumber\\
	&=0-0=0, ~~~~ i'\neq i=1,2,\dots,m.\label{apA_4}
\end{align}
Now, two cases are possible:
\begin{itemize}
	\item $\alpha^{(3)}_{i,j}=0$, i.e., $\alpha^{(2)}_{i,j}=(w_2/w_1)\times \alpha^{(1)}_{i,j}$ for all $j=1,2,\dots,m$. This implies that there is only one set of coefficients $\{\alpha^{(1)}_{i,j}\}_{j=1}^m$ for the $i$-h row of $\boldsymbol{\alpha}$ since scaling the combining coefficients by a constant factor (here, $w_2/w_1$) does not change the performance and the SNR gains as explained for the condition \textbf{C1} in Section III-B1. 
	\item  At least for one $j$, $\alpha^{(3)}_{i,j}\neq0$. This is in contradiction with the linear independence of the rows of $\boldsymbol{P}$ since the linear combination of the rows with nonzero combining coefficients is equal to zero as shown by \eqref{apA_3} and \eqref{apA_4}.
\end{itemize}
This proves that the $i$-th row of $\boldsymbol{\alpha}$ is unique. Repeating the same procedure for $i=1,2,\dots, m$ completes the proof. \endproof

\bibliographystyle{IEEEtran}
\bibliography{IEEEabrv}

\end{document}